\setlist[enumerate]{leftmargin=.5in}
\setlist[itemize]{leftmargin=.5in}
\Crefname{ALC@unique}{Line}{Lines}
\DeclareMathOperator*{\argmin}{argmin}
\newcommand{\op}{\operatorname}
\newcommand{\R}{\mathbb{R}}
\newcommand{\bE}{\mathbb{E}}
\newcommand{\dd}{\mathrm{d}}
\crefname{hypothesis}{Hypothesis}{Hypotheses}
\title{Gradient-enhanced sparse Hermite polynomial expansions for pricing and hedging high-dimensional American options\thanks{\funding{JY acknowledges support from the University of Hong Kong via the HKU Presidential PhD Scholar Programme (HKU-PS). GL acknowledges the support from GRF (project number: 17317122) and Early Career Scheme (Project number: 27301921), RGC, Hong Kong.}}}
\author{
Jiefei Yang\thanks{Department of Mathematics, University of Hong Kong, Pokfulam, Hong Kong
  (\email{jiefeiy@connect.hku.hk}).}
\and Guanglian Li\thanks{Department of Mathematics, University of Hong Kong, Pokfulam, Hong Kong
  (\email{lotusli@maths.hku.hk}).}
}
\begin{document}
\nolinenumbers
\maketitle

\begin{abstract}
We propose an efficient and easy-to-implement gradient-enhanced least squares Monte Carlo method for computing price and Greeks (i.e., derivatives of the price function) of high-dimensional American options. It employs the sparse Hermite polynomial expansion as a surrogate model for the continuation value function, and essentially exploits the fast evaluation of gradients. The expansion coefficients are computed by solving a linear least squares problem that is enhanced by gradient information of simulated paths. We analyze the convergence of the proposed method, and establish an error estimate in terms of the best approximation error in the weighted $H^1$ space, the statistical error of solving discrete least squares problems, and the time step size. We present comprehensive numerical experiments to illustrate the performance of the proposed method. The results show that it outperforms the state-of-the-art least squares Monte Carlo method with more accurate price, Greeks, and optimal exercise strategies in high dimensions but with nearly identical computational cost, and it can deliver comparable results with recent neural network-based methods up to dimension 100.
\end{abstract}

\begin{keywords}
sparse Hermite polynomial expansion, least squares Monte Carlo, backward stochastic differential equation, high dimensions, American option
\end{keywords}

\begin{MSCcodes}
60G40, 	91G60, 	91G20, 62J05, 65C30
\end{MSCcodes}

\section{Introduction}
The early exercise feature of American or Bermudan options gives holders the right to buy (call) or sell (put) underlying assets before the expiration date, and their accurate numerical calculation is of great practical importance. Meanwhile, the efficient estimation of Greeks (i.e., derivatives of the price function, e.g., delta and gamma) is vital for hedging and risk management, since the theory of option pricing builds on the assumption of the absence of arbitrage. For example, when the asset price is on the rise, the gain in the long position of a call writer's asset may offset the potential loss of the call option.

Nonetheless, the early exercise feature of American options poses significant challenges for computing price and Greeks, especially in high dimensions. One of the most popular methods for high-dimensional American option pricing is the least squares Monte Carlo (LSM) method \cite{longstaff2001valuing, tsitsiklis2001regression}. Computing Greeks in high dimensions is more involved, and further developments with LSM have been proposed \cite{wang2009pricing, bouchard2012monte}.
In this work, building on LSM, we shall develop a simple, fast, and accurate algorithm, termed as gradient-enhanced least squares Monte Carlo (G-LSM) method, cf. \cref{alg: sparse hermite polynomial expansions}, for computing price and Greeks simultaneously at all time steps for dimensions up to 100. The key methodological innovations includes using sparse Hermite polynomial space with a hyperbolic cross index set as the ansatz space for approximating the continuation value functions (CVFs), and incorporating the gradient information for computing the expansion coefficients.

Next we elaborate the two methodological innovations. First, the main obstacle of using gradient information lies in the computational expense: a $d$-variate price has $d$ partial derivatives, which grows quickly with $d$, especially when the derivative evaluation is costly. With the proposed sparse Hermite polynomial ansatz space, the derivatives of polynomial bases can be obtained at almost no extra cost, cf. \cref{eq:derivative_herm} below. This allows greatly reducing the computational cost. Second, although using a polynomial ansatz space for the CVFs as LSM, G-LSM finds  the expansion coefficients via solving a linear least squares problem enhanced by the gradient (and hence the name G-LSM), by minimizing the mean squared error between the approximate and exact value functions at $t_{k+1}$, cf. \cref{eq: empirical risk} below. This differs markedly from LSM, which approximates the conditional expectations by projection and minimizes the mean squared error of approximating CVF at $t_k$. Numerical experiments show that this choice can achieve better accuracy in price, Greeks, and optimal exercise strategies than LSM. Although other classes of basis functions, e.g., Chebyshev, Legendre, and Laguerre polynomials, might be applicable, we propose to adopt sparse Hermite polynomials as the ansatz space for CVF because (1) CVF can be reformulated as a function of independent Brownian motions with its best approximation error ensured by  \cref{lem:best-error-HC} and (2) the gradient of Hermite polynomials can be evaluated very efficiently, which is the key point to enable the algorithm to work in high dimensions.

In G-LSM, the approximation of the terminal condition at $t_{k+1}$ is obtained by discretizing the linear backward stochastic differential equation (BSDE) for the CVF, c.f. \cref{thm: linear bsde}, which was recently innovated in a deep neural network-based method for American option pricing \cite{chen2021deep}. The idea of matching the terminal condition has been widely applied in solving high-dimensional BSDEs with deep neural networks (DNNs) \cite{E2017deepbsde, hure2020deep}. In practice, it involves computing the gradient of the CVF, and in turn that of the basis functions in the ansatz space (in addition to function evaluation). When $d\gg1$, for a complicated ansatz space, evaluating the derivatives at all time steps can be prohibitive. We shall show that the extra cost is nearly negligible for the sparse Hermite polynomial ansatz, and that the overall complexity of G-LSM with $N$ time steps, $M$ sample paths, and $N_b$ basis functions is $\mathcal{O}(NMN_b)$, nearly identical with that for LSM. Numerical results show that the accuracy of G-LSM is competitive with DNN-based methods for dimensions up to $d = 100$.

In theory, the CVF can be formulated as a smooth, high-dimensional function in ${L}^2_{\omega}(\R^d)$ with a Gaussian weight function $\omega(\mathbf{y})$ \cite{yang2023optionpricing}.
This regularity enables the use of normalized and generalized Hermite polynomials, which form an orthonormal basis of $L^2_{\omega}(\R^d)$. Furthermore, drawing on the geometric convergence rate of the hyperbolic cross approximation with Hermite polynomials \cite{luo2018error}, we shall prove the global convergence of G-LSM using BSDE technique, stochastic and Malliavin calculus, and establish an error bound in terms of time step size, statistical error of the Monte Carlo approximation, and the best approximation error in weighted Sobolev space $H^1_{\omega}(\R^d)$, c.f. \cref{thm: global error}. In sum, the algorithmic development of G-LSM, its error analysis and extensive numerical evaluation represent the main contributions of the present work.

Now we situate the present study in existing works. Currently, there are two popular classes of methods to price American options in high dimensions: (i) least-squares Monte Carlo-based (LSM) methods and (ii) DNN-based methods. The LSM method has shown tremendous success for pricing American or Bermudan options with more than one stochastic factors. The original LSM \cite{longstaff2001valuing} uses polynomials to approximate the CVF, and other choices have also been explored, e.g., Gaussian process \cite{ludkovski2018kriging} and DNNs \cite{lapeyre2021neural, becker2020pricing}. Recently, LSM with the hierarchical tensor train technique has been studied in \cite{bayer2023pricing}, which demonstrates the success of polynomial approximation for CVFs in very high dimensions. The proposed G-LSM is a variant of LSM that incorporates gradient information that comes nearly for free. Due to the excellent capability for high-dimensional approximation of DNNs, several methods based on DNNs have been proposed for pricing American or Bermudan options, based on optimal stopping problem (parameterizing the stopping time by DNNs and then maximizing the expected reward \cite{becker2019deep}), free boundary PDEs (parameterizing PDE solutions with DNNs \cite{sirignano2018dgm}), or BSDEs (parameterizing the solution pair of the associated reflected BSDE \cite{el1997reflected} by DNNs \cite{hure2020deep}). Within the framework of BSDEs, Chen and Wan \cite{chen2021deep} suggest approximating the difference of the CVF between adjacent time steps by averaging several trained neural networks, which has a quadratic complexity in the number of time steps. Wang et al. \cite{wang2018deep} extend the deep BSDE method \cite{E2017deepbsde} from European option pricing to Bermudan one, with the loss function being the variance of the initial value, and Gao et al. \cite{gao2023convergence} analyze its convergence. In comparison with DNN-based methods, G-LSM enjoys high efficiency and robustness, which involves only least-squares problems and is easy to implement.

The proposed G-LSM scheme is closely linked to the regression-based Monte Carlo method for solving BSDEs \cite{gobet2005regression}. It involves solving a sequence of least squares problems backward in time, where a linear combination of general basis functions is used to approximate each component of the solution $(Y_{t_k}, Z_{1, t_k}, \dots, Z_{d, t_k})$ to the BSDE.  However, this scheme may become infeasible for high dimensions \cite{hure2020deep}. In sharp contrast, in the $d$-dimensional case, the G-LSM method has a computational complexity $d$ times less that of the approach in \cite{gobet2005regression}, while maintains the stability of least squares regression. See Remark \ref{rmk:complexity} for further discussions on the complexity and stability. The efficiency of the scheme \cite{gobet2005regression} for high-dimensional problems was enhanced by implementing a neural network in the scheme \cite{hure2020deep}, which has demonstrated impressive empirical performance.

The structure of this article is organized as follows. In \cref{sec:pricing-hedging}, we describe the mathematical framework of pricing and hedging high-dimensional American or Bermudan options, and in \cref{sec:sparse-hermite}, we recall several useful properties of generalized Hermite polynomials and approximation with sparse hyperbolic cross index set. Then in \cref{sec:algorithm-cost}, we derive the main algorithm, i.e., gradient-enhanced least squares Monte Carlo (G-LSM) method, and establish its local and global error estimates in \cref{sec:convergence}. In \cref{sec:examples}, we present extensive numerical results including prices, Greeks, optimal stopping time, and computing time. We also present a comparative study with existing methods. Finally, we conclude in \cref{sec:conclusion} with further discussions.

Throughout, bold and plain letters represent multi-variable and scalars, respectively, and the capital and bold letters $\mathbf{S}$, $\Tilde{\mathbf{X}}$ and $\mathbf{W}$ denote random vectors. The notation $\mathbf{a} \cdot \mathbf{b}$ denotes the dot product of two vectors $\mathbf{a}$ and $\mathbf{b}$, $\operatorname{Tr}(A)$ the trace of a square matrix $A$, and $^\top$ vector transpose. The notation
$\nabla_\mathbf{x}f(t, \mathbf{x})$ and  $\operatorname{Hess}_{\mathbf{x}}f(t,\mathbf{x})$ denote respectively the gradient and Hessian of $f$ with respect to  $\mathbf{x}$. For a multi-index $\bm{\alpha} = (\alpha_1,\ldots,\alpha_d)^\top \in \mathbb{N}_0^d$, we denote $|\bm{\alpha}|: = \alpha_1 + \dots + \alpha_d$. For a positive-valued integrable function $\omega: \mathbb{R}^d\to \mathbb{R}^+$, the weighted space $L^2_{\omega}(\mathbb{R}^d)$ is defined by
\begin{align*}
L^2_{\omega}(\mathbb{R}^d):=\{f:\mathbb{R}^d\to \mathbb{R}: \|f\|_{L^2_{\omega}(\mathbb{R}^d)}<\infty\},\quad \text{ with }  \|f\|_{L^2_{\omega}(\mathbb{R}^d)}^2:=\int_{\mathbb{R}^d}f(\mathbf{x})^2\omega(\mathbf{x})\mathrm{d}\mathbf{x}.
\end{align*}
The weighted Sobolev space $H_{\omega}^m(\mathbb{R}^d)$, $m\in \mathbb{N}$, is defined by
$$
H_{\omega}^m(\mathbb{R}^d) := \{f: \R^d \to \R: \|f\|_{H_\omega^m(\R^d)} < \infty\}, \quad \text{with } \|f\|_{H_\omega^m(\R^d)}^2 = \sum_{0\le |\bm{\alpha}|\le m} \left\|\frac{\partial^{\bm{\alpha}} f}{\partial \mathbf{x}^{\bm{\alpha}}}\right\|_{L^2_\omega(\R^d)}^2.
$$
\section{Bermudan option pricing and hedging} \label{sec:pricing-hedging}
Now we describe the valuation framework for American or Bermudan option pricing and hedging.

\subsection{Option pricing and Greeks}
The fair price of American option $v^A(t)$ at time $t\in [0,T]$ is expressed as the solution to the optimal stopping problem in a risk neutral probability space $(\Omega, \mathcal{F}, (\mathcal{F}_t)_{0\le t\le T},\mathbb{Q})$,
\begin{equation*}
    v^A(t) = \sup_{\tau_t \in [t,T]} \mathbb{E}[e^{-r(\tau_t - t)} g(\mathbf{S}_{\tau_t} ) | \mathcal{F}_t],
\end{equation*}
where $\tau_t$ is an $\mathcal{F}_t$-stopping time, $T>0$ is the expiration date, $(\mathbf{S}_t)_{0\le t\le T}$ is a collection of $d$-dimensional price processes, and $g(\mathbf{S}_t) \in L^2(\Omega, \mathcal{F}_t, \mathbb{Q})$ is the payoff depending on the type of the option.

Numerically, the price of Bermudan option is used to approximate the American one. The Bermudan option can be exercised at finite discrete times $0 = t_0 < t_1 < \dots < t_N = T$ with $\Delta t := t_{k+1} - t_k$ for all $k = 0,1,\dots, N-1$. Using dynamic programming principle or Snell envelope theory \cite[Section 1.8.4]{seydel2006tools}, the Bermudan price function ${v}_{t_k}$ at time $t_k$ is given by the following backward induction:
\begin{equation} \label{eq: backward induction}
    \begin{aligned}
        v_{t_N}(\mathbf{s}) &= g_{t_N}(\mathbf{s}), \\
        v_{t_k}(\mathbf{s}) &= \begin{cases}
            g_{t_k}(\mathbf{s}), \quad \text{ if } g_{t_k}(\mathbf{s}) \ge \tilde{c}_{t_k}(\mathbf{s}), \\
            \tilde{c}_{t_k}(\mathbf{s}), \quad \text{ if } g_{t_k}(\mathbf{s}) < \tilde{c}_{t_k}(\mathbf{s}),
        \end{cases} \quad \text{ for } k = N-1:-1:0,
    \end{aligned}
\end{equation}
where $g_{t_k}$ is the discounted payoff (exercise value) function and $\tilde{c}_{t_k}$ is the CVF, defined by
\begin{equation} \label{eq: continuation value function S}
    \tilde{c}_{t_k}(\mathbf{s}) = \mathbb{E}[v_{t_{k+1}}(\mathbf{S}_{t_{k+1}}) | \mathbf{S}_{t_k} = \mathbf{s}].
\end{equation}
The options delta and gamma are defined to be the first and second order derivatives of the price function $v_{t_k}$ with respect to the price of underlying assets:
\begin{equation} \label{eq: defi greeks}
    \begin{aligned}
        \Delta_{t_k} &:= \nabla v_{t_k}(\mathbf{s}) = \left(\frac{\partial v_{t_k}(\mathbf{s})}{\partial s_1}, \dots, \frac{\partial v_{t_k}(\mathbf{s})}{\partial s_d} \right)^\top\quad \mbox{and}\quad
        \Gamma^{ij}_{t_k} &:= \frac{\partial^2 v_{t_k}(\mathbf{s})}{\partial s_j \partial s_i}.
    \end{aligned}
\end{equation}
We consider all exercise and continuation values of Bermudan option discounted to the time $t = 0$. Since the discrete-time approximation for American option pricing is always needed in practical computation, we will make no semantic difference between American and Bermudan options, following \cite[p. 133]{guyon2013nonlinear}.

\subsection{Multi-asset model and transformation} \label{sec:multi model}
One of the most classical models for high-dimensional American option pricing is the multi-asset Black-Scholes model. Under the risk-neutral probability $\mathbb{Q}$, the prices of $d$ underlying assets, $\mathbf{S}_t = \left(S^1_t, \dots, S^d_t \right)^\top$, follow the correlated geometric Brownian motions
\begin{equation} \label{eq: multi asset BS model}
    \mathrm{d}S^i_t = (r - \delta_i) S^i_t\,\mathrm{d}t + \sigma_i S^i_t\,\mathrm{d}\tilde{W}^i_t \quad \text{ with } S^i_0 = s^i_0, \quad i = 1,2,\dots,d,
\end{equation}
where $\tilde{W}^i_t$ are correlated Brownian motions with correlation $\mathbb{E}[\mathrm{d}\tilde{W}^i_t \mathrm{d}\tilde{W}^j_t] = \rho_{ij} \,\mathrm{d}t$, and $r$, $\delta_i$ and $\sigma_i$ are the riskless interest rate, dividend yields,
and volatility parameters, respectively. We
denote the correlation matrix by $P = (\rho_{ij})_{d\times d}$, the volatility matrix by $\Sigma$ (which is a diagonal matrix with
volatility $\sigma_i$ on the diagonal), and write the dividend yields as a vector $\bm{\delta} = [\delta_1, \dots, \delta_d]^\top$. Using the spectral decomposition $\Sigma P \Sigma^\top=Q\Lambda Q^\top$, the rotated log-price $\Tilde{\mathbf{X}}_t := Q^\top \ln(\mathbf{S}_t./\mathbf{s}_0)$ satisfies an independent Gaussian distribution
\begin{align*}
\Tilde{\mathbf{X}}_t\sim
\mathcal{N}\left(Q^\top \left(r - \bm{\delta} - \frac{1}{2}\Sigma^2\mathbf{1}\right)t, \Lambda t\right).
\end{align*}
Let $\bm{\mu} = Q^\top \left(r - \bm{\delta} - \frac{1}{2}\Sigma^2\mathbf{1}\right)$ and $\lambda_i$ be the $i$-th diagonal element of $\Lambda$. This gives a transformation between underlying asset prices $\mathbf{S}_t$ and independent Brownian motions $\mathbf{W}_t = [W^1_t, \dots, W^d_t]^\top$, i.e.,
\begin{equation} \label{eq: trans S and W}
    \mathbf{S}_t = \mathbf{s}_0 \odot \exp\left(Q \left(\bm{\mu}t + \sqrt{\Lambda} \mathbf{W}_t \right) \right),
\end{equation}
where $\odot$ denotes componentwise product.

From \cref{eq: trans S and W} and \cref{eq: continuation value function S}, the CVF $c_{t_k}$ with respect to the independent Brownian motions is
\begin{equation} \label{eq: continuation value W}
    c_{t_k}(\mathbf{w}) = \mathbb{E}[ u_{t_{k+1}}(\mathbf{W}_{t_{k+1}}) | \mathbf{W}_{t_k} = \mathbf{w}],
\end{equation}
where $u_{t_{k+1}}$ represents the value function at time $t_{k+1}$ with respect to the independent Brownian motions. Our aim is to develop a gradient-enhanced least squares Monte Carlo method that can efficiently approximate $c_{t_k}$, and thus provide accurate prices and their derivatives.

\section{Sparse Hermite polynomial expansion and gradient}\label{sec:sparse-hermite}
Sparse polynomial chaos expansion can serve as a surrogate model of unknown stochastic variables with finite second-order moments. The motivations of using sparse Hermite polynomial expansion for pricing and hedging American options are twofold:
\begin{enumerate}
    \item Let $\omega_t$, $t>0$, be the Gaussian density function defined by $\omega_t(\mathbf{y}) := \prod_{j=1}^d \rho_{t}(y_j)$ with $\rho_t(y):= \frac{1}{\sqrt{2\pi t}} \exp(-\frac{y^2}{2t})$. The CVF $c_{t_k}$ satisfies $c_{t_k} \in {L}^2_{\omega_{t_k}}(\R^d)$, since the payoff $g(\mathbf{S}_{t_k})\in L^2(\Omega, \mathcal{F}_{t_k}, \mathbb{Q})$ has a finite second-order moment. Thus, as an orthonormal basis for $L^2_{\omega_{t_k}}(\R^d)$, the set of normalized and generalized Hermite polynomials is a natural choice.
    \item The CVF $c_{t_k} $ has the regularity ${H}^m_{\omega_{t_k}}(\R^d)$ for any positive integer $m$ by \cite[Lemmas 4.2 and 4.3]{yang2023optionpricing}. The smoothness of the function implies efficient polynomial approximation.
\end{enumerate}

Next, we recall one-dimensional normalized and generalized Hermite polynomials $H_n^{(t)}(y)$ and the tensorized $d$-dimensional polynomials $H_{\bm{\alpha}}^{(t)}(\mathbf{y})$, $\bm{\alpha} \in \mathbb{N}_0^d$. For $d = 1$, using the standard probabilist’s Hermite polynomials $H_n(x) = (-1)^n e^{\frac{x^2}{2}} \frac{\dd^n}{\dd x^n}e^{-\frac{x^2}{2}}$, we define the  $n$th-order normalized and generalized Hermite polynomial $H_n^{(t)}$ by
\begin{equation*}
    H_n^{(t)}(y) := \frac{H_n(\frac{y}{\sqrt{t}})}{\sqrt{n!}}, \quad \text{ for } n\in\mathbb{N}_0, t>0, y\in \R.
\end{equation*}
Then $\{H_n^{(t)}\}_{n\in \mathbb{N}_0}$ forms a complete orthonormal basis for ${L}^2_{\rho_{t}}(\R)$:
\begin{equation*}
    \mathbb{E}\left[ H_n^{(t)}(W_t) H_m^{(t)}(W_t) \right] = \int_\R H_n^{(t)}(y) H_m^{(t)}(y) \rho_t(y) \,\mathrm{d}y = \delta_{nm},
\end{equation*}
with $\delta_{nm}$ being the Kronecker delta.
For $d>1$, the tensorized Hermite polynomial with the multi-index $\bm{\alpha}=(\alpha_1,\ldots,\alpha_d)^\top \in \mathbb{N}_0^d$ defined by
\begin{equation*}
    H_{\bm{\alpha}}^{(t)}(\mathbf{y}) := \prod_{j=1}^d H_{\alpha_j}^{(t)}(y_j)
\end{equation*}
forms a complete orthonormal basis for ${L}^2_{\omega_{t}}(\R^d)$, satisfying
\begin{equation*}
    \mathbb{E}\left[ H_{\bm{\alpha}}^{(t)}(\mathbf{W}_t) H_{\bm{\gamma}}^{(t)}(\mathbf{W}_t) \right] = \int_{\R^d} H_{\bm{\alpha}}^{(t)}(\mathbf{y}) H_{\bm{\gamma}}^{(t)}(\mathbf{y})\omega_{t}(\bm{y}) \,\mathrm{d}\mathbf{y} = \delta_{\bm{\alpha}, \bm{\gamma}}, \quad \bm{\alpha}, \bm{\gamma} \in \mathbb{N}_0^d.
\end{equation*}
Here, $\delta_{\bm{\alpha}, \bm{\gamma}} = \prod_{j=1}^d \delta_{\alpha_j,\gamma_j}$ and
$\omega_{t}(\bm{y})=\prod_{j=1}^d \rho_t(y_j) $.

For any fixed multi-index set $I \subset \mathbb{N}_0^d$, the Hermite polynomial ansatz space $P_{I,k}$ is defined by
\begin{align}\label{eq:pkI}
P_{I,k}:=\text{span}\left\{H_{\bm{\alpha}}^{(t_k)}: \bm{\alpha} \in I\right\}.
\end{align}
Then we aim to approximate the CVF $ c_{t_k}$ in $P_{I,k}$ by
\begin{equation*}
    c_{t_k}(\mathbf{W}_{t_k}) \approx \sum_{\bm{\alpha}\in I} \beta_{\bm{\alpha}} H_{\bm{\alpha}}^{(t_k)}(\mathbf{W}_{t_k}).
\end{equation*}
It is well-known that computing polynomial approximations in high dimensions suffers from the notorious curse of dimensionality with tensor product-type multi-index sets. Fortunately, the smoothness of $c_{t_k}$ implies a fast decay of the coefficients in the polynomial expansion. The large coefficients usually occur in a lower multi-index set $I$, that is, if $\bm{\alpha}\in I$ and $\bm{\gamma} \le \bm{\alpha}$, then $\bm{\gamma}\in I$ \cite[Section 1.5]{adcock2022sparse}. To circumvent the curse of dimensionality,  the decay property of the expansion coefficients can be exploited and a hyperbolic cross sparse index set can be used to construct an approximation. The hyperbolic cross multi-index set $I$ with maximum order $p\in \mathbb{N}$ is defined by
\begin{equation} \label{eq:HC-defi}
    I := \Big\{\bm{\alpha} = (\alpha_j)_{j=1}^d \in \mathbb{N}_0^d: \prod_{j=1}^d \max(\alpha_j, 1) \le p\Big\},
\end{equation}
which has a cardinality $\mathcal{O}(p(\ln p)^{d-1})$. The best approximation error of the sparse Hermite polynomial approximation with hyperbolic cross index set was analyzed in \cite{luo2018error}; see \cref{sec:convergence} for details.

Now, we introduce a property of derivatives of Hermite polynomials, which plays an important role in reducing the computational cost. The first-order derivative of the one-dimensional normalized and generalized Hermite polynomial $H_n^{(t)}(y)$ satisfies
\begin{equation*}
    \frac{\mathrm{d}}{\mathrm{d}y}\left(H_n^{(t)}(y)\right) = \sqrt{\frac{n}{t}} H_{n-1}^{(t)}(y).
\end{equation*}
For the $d$-dimensional Hermite polynomial, we have
\begin{equation} \label{eq:derivative_herm}
    \frac{\partial}{\partial y_j}\left( H_{\bm{\alpha}}^{(t)}(\mathbf{y}) \right) = \sqrt{\frac{\alpha_j}{t}} H_{\bm{\alpha} - \mathbf{e}_j}^{(t)}(\mathbf{y}),
\end{equation}
where $\mathbf{e}_j$ is the $j$-th canonical basis vector. This implies that for a lower multi-index set $I \subset \mathbb{N}_0^d$, we have $\bm{\alpha} - \mathbf{e}_j \in I$ if $\bm{\alpha}\in I$. Thus, once the evaluations of $H_{\bm{\alpha}}^{(t)}(\mathbf{y})$ for $\bm{\alpha}\in I$ are available, the gradients of $H_{\bm{\alpha}}^{(t)}(\mathbf{y})$ for $\bm{\alpha}\in I$ can be evaluated cheaply.

\section{Algorithm and complexity} \label{sec:algorithm-cost}
Now we derive the main methodology to approximate the CVF $c_{t_k}$ by matching values of $u_{t_{k+1}}$, and analyze its computational complexity.
Below we abbreviate the notations $c_{t_k}$, $u_{t_k}$ and $\mathbf{W}_{t_k}$ to $c_k$, $u_k$ and $\mathbf{W}_k$, etc, for $k=0,1,\dots, N$.

\subsection{Gradient-enhanced Least Squares (G-LS)}
First we derive a linear backward stochastic differential equation (BSDE) for the CVF $ c_k$. It is worth noting that the corresponding BSDE does not contain a generator, and the option price is solely a function of Brownian motion rather than the underlying asset. This fact is crucial for developing the G-LSM using sparse Hermite polynomials.
\begin{theorem} \label{thm: linear bsde}
The CVF $c_k(\mathbf{W}_k)$ satisfies the linear BSDE
    \begin{equation*}
        c_k(\mathbf{W}_k) = u_{k+1}(\mathbf{W}_{k+1}) - \int_{t_k}^{t_{k+1}} \nabla_{\mathbf{w}}f(t, \mathbf{W}_t) \cdot \mathrm{d}\mathbf{W}_t,
    \end{equation*} 
where $f(t, \mathbf{w}), t\in [t_k, t_{k+1}]$, is defined by
$f(t, \mathbf{w}) := \mathbb{E}[ u_{k+1}(\mathbf{W}_{k+1}) | \mathbf{W}_t = \mathbf{w} ].$
\end{theorem}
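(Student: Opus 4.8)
The plan is to exploit the defining property that $f(t,\mathbf{w}) := \mathbb{E}[u_{k+1}(\mathbf{W}_{k+1})\mid \mathbf{W}_t = \mathbf{w}]$ is, by the Markov property of Brownian motion, a martingale when evaluated along the path. First I would observe that the process $M_t := f(t,\mathbf{W}_t)$, for $t\in[t_k,t_{k+1}]$, is an $(\mathcal{F}_t)$-martingale: by the tower property and the Markov property of $\mathbf{W}_t$, one has $\mathbb{E}[u_{k+1}(\mathbf{W}_{k+1})\mid \mathcal{F}_t] = \mathbb{E}[u_{k+1}(\mathbf{W}_{k+1})\mid \mathbf{W}_t] = f(t,\mathbf{W}_t) = M_t$, so $M_t$ is a version of the conditional expectation martingale. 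This immediately identifies the two endpoints we care about: $M_{t_{k+1}} = f(t_{k+1},\mathbf{W}_{k+1}) = u_{k+1}(\mathbf{W}_{k+1})$, and $M_{t_k} = f(t_k,\mathbf{W}_k) = \mathbb{E}[u_{k+1}(\mathbf{W}_{k+1})\mid \mathbf{W}_k] = c_k(\mathbf{W}_k)$, the last equality being precisely the definition in \cref{eq: continuation value W}.

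Next I would apply It\^o's formula to $f(t,\mathbf{W}_t)$ on $[t_k,t_{k+1}]$, assuming sufficient regularity of $f$ (which is available since the CVF and hence $f$ inherit the weighted $H^m_\omega$ smoothness recorded earlier in the excerpt). Since $\mathbf{W}_t$ is a standard $d$-dimensional Brownian motion with independent components, It\^o's formula gives
\begin{equation*}
\mathrm{d}f(t,\mathbf{W}_t) = \Big(\partial_t f + \tfrac{1}{2}\Delta_{\mathbf{w}} f\Big)\,\mathrm{d}t + \nabla_{\mathbf{w}} f(t,\mathbf{W}_t)\cdot \mathrm{d}\mathbf{W}_t .
\end{equation*}
Because $M_t = f(t,\mathbf{W}_t)$ is a martingale, its finite-variation (drift) part must vanish, forcing $f$ to satisfy the backward heat equation $\partial_t f + \tfrac{1}{2}\Delta_{\mathbf{w}} f = 0$; equivalently, one can invoke this Kolmogorov backward equation directly, since $f$ is a conditional expectation of a terminal functional. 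Either route leaves only the stochastic integral term, so that $\mathrm{d}f(t,\mathbf{W}_t) = \nabla_{\mathbf{w}} f(t,\mathbf{W}_t)\cdot\mathrm{d}\mathbf{W}_t$.

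Finally I would integrate this identity from $t_k$ to $t_{k+1}$ and rearrange. Integration yields
\begin{equation*}
f(t_{k+1},\mathbf{W}_{k+1}) - f(t_k,\mathbf{W}_k) = \int_{t_k}^{t_{k+1}} \nabla_{\mathbf{w}} f(t,\mathbf{W}_t)\cdot \mathrm{d}\mathbf{W}_t .
\end{equation*}
Substituting the two endpoint identifications $f(t_{k+1},\mathbf{W}_{k+1}) = u_{k+1}(\mathbf{W}_{k+1})$ and $f(t_k,\mathbf{W}_k) = c_k(\mathbf{W}_k)$ and solving for $c_k(\mathbf{W}_k)$ gives exactly the claimed linear BSDE. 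The main obstacle I anticipate is the justification of It\^o's formula and of the vanishing-drift argument: strictly this requires $f(t,\cdot)\in C^{1,2}$ with integrability of $\nabla_{\mathbf{w}} f$ ensuring the stochastic integral is a genuine (square-integrable) martingale rather than a mere local martingale. I would discharge this by appealing to the smoothing property of the heat semigroup together with the $H^m_\omega$ regularity of $u_{k+1}$ (guaranteed by the cited results of \cite{yang2023optionpricing}), which makes $f$ smooth and its gradient square-integrable on $[t_k,t_{k+1}]$; the remaining steps are then routine applications of stochastic calculus.
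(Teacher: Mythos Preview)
Your proposal is correct and follows essentially the same route as the paper: invoke the Feynman--Kac/Kolmogorov backward equation for $f$, apply It\^o's formula to $f(t,\mathbf{W}_t)$ so that the drift vanishes, and integrate from $t_k$ to $t_{k+1}$ using the endpoint identifications. The paper also addresses your anticipated regularity obstacle in a remark, noting that the $C^2$ requirement on the terminal condition can be relaxed to Lipschitz continuity via the nonlinear Feynman--Kac formula.
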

\begin{proof}
This is a direct consequence of martingale representation theorem \cite[Theorem 2.5.2]{zhang2017backward}. For the sake of completeness, we provide a brief proof. 
By the Feynman-Kac formula, $f(t, \mathbf{w})$ satisfies a $d$-dimensional parabolic PDE subject to a terminal condition
\begin{equation}\label{eq:para-terminal}
\left\{
\begin{aligned}
\frac{\partial f}{\partial t}(t, \mathbf{w}) + \frac{1}{2}\operatorname{Tr}\left(\operatorname{Hess}_{\mathbf{w}}f(t,\mathbf{w}) \right) &= 0,\quad  t\in [t_k, t_{k+1}), \\
f(t_{k+1}, \mathbf{w}) &= u_{t_{k+1}}(\mathbf{w}).
\end{aligned}
\right.
\end{equation}
Let $Y_t := f(t, \mathbf{W}_t)$. By It\^{o}'s formula, we have
    \begin{equation*}
        \dd Y_t = \left( \frac{\partial f}{\partial t}(t, \mathbf{W}_t) + \frac{1}{2}\operatorname{Tr}\left(\operatorname{Hess}_{\mathbf{w}}f(t,\mathbf{W}_t) \right) \right) \,\mathrm{d}t + \nabla_{\mathbf{w}}f(t, \mathbf{W}_t) \cdot \mathrm{d}\mathbf{W}_t.
    \end{equation*}
In view of \cref{eq:para-terminal}, the drift term vanishes. After taking the stochastic integral and using the terminal condition in \cref{eq:para-terminal}, we obtain the desired assertion.
\end{proof}

\begin{remark}
The Feynman-Kac formula in the classical book \cite{oksendal2013stochastic} requires $C^2$ regularity of the terminal condition, which is not satisfied by the value function. The regularity requirement can be relaxed by considering the generalized result of nonlinear Feynman-Kac formula in \cite[p. 105]{zhang2017backward}, which only requires the terminal condition to be uniformly Lipschitz continuous.
\end{remark}

Next, we construct an approximation of the CVF ${c}_{k}$ by matching the terminal condition over the interval $[t_k,t_{k+1}]$. By \cref{thm: linear bsde}, the terminal value over $[t_k,t_{k+1}]$ can be approximated by the Euler discretization:
\begin{equation} \label{eq: approximation equation}
     \bar{u}_{k+1}(\mathbf{W}_{k+1}) =
     {c}^{{\rm CLS}}_{k}(\mathbf{W}_k) + \nabla {c}^{\op{CLS}}_{k}(\mathbf{W}_k) \cdot \Delta\mathbf{W}_k,
\end{equation}
where $\Delta\mathbf{W}_k:=\mathbf{W}_{k+1} - \mathbf{W}_k$ is the Brownian increment and ${c}^{\op{CLS}}_{k}$ denotes an approximation to the CVF ${c}_{k}$. Let $\hat{u}_{k+1}$ be the value function computed in the last time step. Then using $P_{I,k}$ defined in \cref{eq:pkI} and \cref{eq:HC-defi} as the ansatz space for ${c}^{\op{CLS}}_{k}$, we solve for ${c}^{\op{CLS}}_{k}$ using the least squares regression:
\begin{equation} \label{eq: expected risk}
    {c}_{k}^{\op{CLS}}(\mathbf{W}_k) = \argmin_{\psi \in P_{I,k}} E_k(\psi),
\end{equation}
with $E_k(\cdot): P_{I,k}\to \mathbb{R}^+$ being the quadratic loss (i.e., mean squared error) defined by
\begin{align*}
E_k(\psi):=\mathbb{E}\left[\left( \hat{u}_{k+1}(\mathbf{W}_{k+1}) - \psi(\mathbf{W}_k) - \nabla \psi(\mathbf{W}_k) \cdot \Delta \mathbf{W}_k \right)^2 \right].
\end{align*}
Finally, the numerical value $\hat{u}_k$ at time $t_k$ is updated to be the discounted exercise or continuation value at $t_k$:
\begin{align}\label{eq:num-val}
\hat{u}_{k}(\mathbf{W}_k)=\begin{cases}
    g_k(\mathbf{S}_k), \quad &\text{ if exercise}, \\
    c_k^{\op{CLS}}(\mathbf{W}_k), \quad &\text{ if continue}.
\end{cases}
\end{align}
Since the option is only profitable when exercised in the in-the-money region $\Omega_{\op{ITM}} = \{\mathbf{s}\in \R^d: g_k(\mathbf{s})>0\}$, we make the decision of exercising the option when $c_k^{\op{CLS}}(\mathbf{W}_k) < g_k(\mathbf{S}_k)$ and $\mathbf{S}_k \in \Omega_{\op{ITM}}$; otherwise, the option will be continued.

\subsection{Gradient-enhanced Least Squares Monte Carlo (G-LSM)} \label{subsec:glsm}
Now we derive the methodology based on the Monte Carlo method to solve \cref{eq: expected risk} numerically and analyze its computational complexity. Let $N_b=|I| $ and let $\{\phi_n^{k}(\mathbf{W}_k)\}_{n=1}^{N_b}$ be the set of Hermite polynomials in a scalar-indexed form. Then any function $\psi\in P_{I,k}$ can be expressed as
\begin{align*}
\psi(\mathbf{W}_k) = \sum_{n=1}^{N_b} \beta_n \phi_n^{k}(\mathbf{W}_k),
\end{align*}
with its gradient given by
\begin{equation} \label{eq: gradient of approx cv}
    \nabla \psi(\mathbf{W}_k) = \sum_{n=1}^{N_b} \beta_n \nabla \phi_n^{k}(\mathbf{W}_k).
\end{equation}
In practice, the continuous least squares problem \cref{eq: expected risk} is solved by minimizing its Monte Carlo approximation:
\begin{equation} \label{eq: empirical risk}
    \hat{c}_k := \argmin_{\psi \in P_{I,k}}
    \frac{1}{M}\sum_{m=1}^M
    \big(\hat{u}_{k+1}(\mathbf{W}^m_{k+1}) -
    \psi(\mathbf{W}^m_k) - \nabla \psi(\mathbf{W}^m_k) \cdot  \Delta\mathbf{W}^m_k
    \big)^2,
\end{equation}
where $\{\mathbf{W}^m_k\}_{m=1}^M$ are $M$ independent paths of the Gaussian random process $\mathbf{W}_k$ and $\Delta\mathbf{W}^m_k:=\mathbf{W}^m_{k+1} - \mathbf{W}^m_k$ is the $m$th path of the increment $\Delta \mathbf{W}_k$. Let $A_k\in\mathbb{R}^{M\times N_b}$, $\bm{\beta}_k\in\mathbb{R}^{ N_b}$ and $\mathbf{\hat{u}}_{k+1}\in\mathbb{R}^{ M}$ with their components defined by
\begin{equation} \label{eq:defi-linear-system}
    \begin{aligned}
        (A_k)_{m,n} &= \phi_n^{k}(\mathbf{W}^m_k) + \nabla \phi_n^{k}(\mathbf{W}^m_k) \cdot \Delta\mathbf{W}^m_k , \\
        (\bm{\beta}_k)_n &= \beta_n, \quad 
        (\mathbf{\hat{u}}_{k+1})_m = \hat{u}_{k+1}(\mathbf{W}^m_{k+1})
    \end{aligned}
\end{equation}
for $m = 1,\dots, M$, $n = 1,\dots, N_b$. 

Then finding the optimal polynomial in \cref{eq: empirical risk} amounts to solving the classical least squares problem
\begin{equation}\label{eq:ls-linearSystem}
    \bm{\beta}_k = \argmin_{\bm{\beta}} \|A_k \bm{\beta} - \mathbf{\hat{u}}_{k+1}\|_2^2 = (A_k^\top A_k)^{-1}A_k^\top \mathbf{\hat{u}}_{k+1}.
\end{equation}
The proposed algorithm is summarized in \cref{alg: sparse hermite polynomial expansions}.

\begin{remark}
Note that the well-known least squares Monte Carlo (LSM) method  \cite{longstaff2001valuing, tsitsiklis2001regression} also approximates the CVF with a finite number of basis functions. The orthonormal Hermite polynomials is one possible choice. However, LSM computes the coefficients by projecting the value, $u_{t_{k+1}}(\mathbf{W}_{t_{k+1}})$ in \cref{eq: continuation value W}, onto a finite-dimensional space spanned by the basis functions due to the projection nature of conditional expectation:
\begin{equation*}
    \hat{c}_k^{LSM} := \argmin_{\psi \in P_{I,k}}
    \frac{1}{M}\sum_{m=1}^M
    \big(\hat{u}_{k+1}(\mathbf{W}^m_{k+1}) -
    \psi(\mathbf{W}^m_k) \big)^2.
\end{equation*}
In contrast, G-LSM computes the coefficients by matching the approximate and exact value of $u_{t_{k+1}}(\mathbf{W}_{t_{k+1}})$, see \cref{eq: empirical risk}. Their numerical performance will be compared in  \cref{sec:examples}.
\end{remark}

\begin{algorithm}[htbp]
\caption{G-LSM}
\label{alg: sparse hermite polynomial expansions}
\begin{algorithmic}[1]
\REQUIRE Market parameters: $ \mathbf{S}_0, r, \delta_i, \sigma_i, P $ \par
        Option parameters: payoff function $g(\mathbf{s})$ \par
        Algorithm parameters: $N, M, p$
\ENSURE Option price $v_0$
\STATE Compute the hyperbolic cross multi-index set $I$ with maximum order $p$ 
\STATE Generate $M$ sample paths
\STATE Initialize values $\mathbf{\hat{u}}_N = (e^{-rT}g(\mathbf{S}_N^m))_{m=1}^M$
\STATE Initialize stopping times $\tau_\star = T$
\FOR {$k = N-1:-1:1$}
    \STATE $\Phi_k = $ basis matrix$(\mathbf{W}_k, I)$ with $(\Phi_k)_{m,n} = \phi_n^k(\mathbf{W}_k^m)$
    \STATE Compute matrix $A_k$ with \cref{alg: compute matrix A}
    \STATE Solve system of linear equations: $A_k\bm{\beta}_k = \mathbf{u}_{k+1}$
    \STATE Update $(\mathbf{\hat{u}}_{k})_m = \begin{cases}
        e^{-rk\Delta t} g(\mathbf{S}^m_k), &\text{if exercise} \\
        (\Phi_k \bm{\beta}_k)_m, &\text{if continue}
    \end{cases}
    \text{ and }
    \tau_\star = \begin{cases}
        k\Delta t,  &\text{if exercise} \\
        \tau_\star, &\text{if continue}
    \end{cases}$
\ENDFOR
\STATE $v_0 = \max\left\{\frac{1}{M} \sum_{m=1}^M e^{-r\tau_\star^m} g(\mathbf{S}^m_{\tau_\star^m}),g(\mathbf{S}_0)\right\}$
\end{algorithmic}
\end{algorithm}

For the efficient computation of the matrix $A_k$, using \cref{eq:derivative_herm}, i.e.,
\begin{equation} \label{eq:basis-derivative}
    \frac{\partial \phi_n^{k}}{\partial w_j}(\mathbf{W}^m_k) = \sqrt{\frac{\alpha_j}{t_k}} \phi_{n'}^{k}(\mathbf{W}^m_k) \quad \text{ for } \phi_n^{k} = H_{\bm{\alpha}}^{(t_k)}, \phi_{n'}^{k} = H_{\bm{\alpha} - \mathbf{e}_j}^{(t_k)},
\end{equation}
we can compute the matrix $A_k$ from the basis matrix $\Phi_k$ and the increment $\Delta\mathbf{W}_k$. This routine is summarized in \cref{alg: compute matrix A}, where $(A_k)_{n}$ represents the $n$-th column of the matrix $A_k$.

Finally, we analyze the computational complexity of \cref{alg: sparse hermite polynomial expansions}. We employ backward induction \cref{eq: backward induction} with $N$ time steps to price American options, which has a complexity $\mathcal{O}(N)$. For each fixed $t_k$, the computation consists of steps 6-9 in \cref{alg: sparse hermite polynomial expansions}. Step 6 has a complexity $\mathcal{O}(MN_b)$, when using $M$ samples and $N_b$ basis functions. Step 7 is detailed in \cref{alg: compute matrix A}, which has a linear complexity in the number of nonzero $\alpha_j$ for all $\bm{\alpha} \in I$ and $j=1,\dots,d$, 
\begin{align*}
\|I\|_0:=\sum_{\bm{\alpha}\in I}\#\left\{j=1,\cdots,d: \alpha_j\neq 0\right\}. 
\end{align*}
Note that $\|I\|_0$ can be calculated by $(N_b - N_{b,p-1})d$ for maximum order $p$, where $N_{b,p-1}$ represents the number of basis functions with maximum order $p-1$. \Cref{fig: hc_nnz} shows that $\|I\|_0$ scales nearly linearly with respect to $N_b$. Thus, the complexity of \cref{alg: compute matrix A} is linear in $N_b$. Step 8 involves solving a linear system of size $M\times N_b$. Using an iterative solver, the cost is $\mathcal{O}(MN_b)$ if the matrix $A$ is well-conditioned. Step 9 involves matrix-vector multiplication, which has a complexity $\mathcal{O}(MN_b)$. Hence, with a fixed sample size $M$ and number $N$ of time steps, the total cost of the proposed G-LSM is $\mathcal{O}(NMN_b)$. We will numerically verify the complexity analysis in \cref{subsec:ex3}.

\begin{algorithm}[htbp]
\caption{Compute the matrix $A_k$ using $I, \Phi_k$ and $\Delta\mathbf{W}_k$}
\label{alg: compute matrix A}
\begin{algorithmic}[1]
\STATE Initialize $A_k = \Phi_k$ 
\FOR{$j=1:d$}
    \FOR{$\bm{\alpha} \in I$}
        \IF{$\alpha_j \ge 1$}
        \STATE $(A_k)_{n} = (A_k)_{n} + (\Delta W_k)_j \odot (\Phi_k)_{n'} \sqrt{\frac{\alpha_j}{t_k}}$.
        \ENDIF
    \ENDFOR
\ENDFOR
\end{algorithmic}
\end{algorithm}

\begin{figure}[htbp]
    \centering
    \begin{tabular}{ccc}
    \includegraphics[width = .3\textwidth]{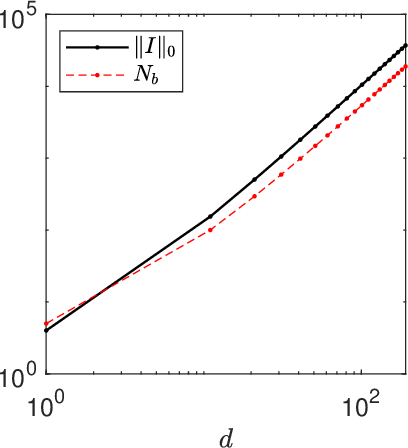}
    &
    \includegraphics[width = .3\textwidth]{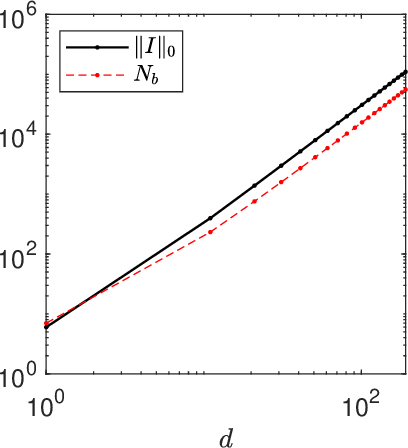}
    &
    \includegraphics[width = .3\textwidth]{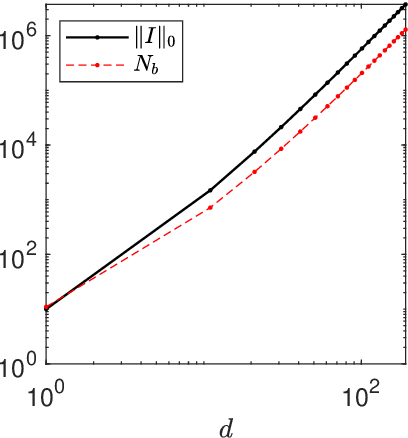}\\
    (a) $p=4$ & (b) $p=6$ & (c) $p=10$
    \end{tabular}
    \caption{$\|I\|_0$ scales linearly with respect to $N_b$.}
    \label{fig: hc_nnz}
\end{figure}

\subsection{Stability of the system of linear equations (\ref*{eq:ls-linearSystem})}
Next, we derive the stability of the system of linear equations \eqref{eq:ls-linearSystem} by estimating the condition number $\mathrm{cond}(A_k^\top A_k)$ (under the standard Euclidean norm) of the matrix $A_k^\top A_k$ defined in \eqref{eq:defi-linear-system} when the number $M$ of simulation paths is large. This result underpins the efficiency of the standard conjugate gradient method \cite{Hestenes:1952} for solving the least-squares problem (the associated normal equation, to be more precise).

\begin{theorem}\label{thm:cond}
Let $A_k\in\mathbb{R}^{M\times N_b}$ be defined in \eqref{eq:defi-linear-system}. Then there holds
\begin{align*}
\lim_{M\to\infty}\operatorname{cond}(A_k^\top A_k)= 1+\frac{p+d-1}{k} \quad\text{almost\; surely.}
\end{align*}
\end{theorem}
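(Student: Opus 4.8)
The plan is to pass to the $M\to\infty$ limit via the strong law of large numbers and identify the limiting normalized Gram matrix $\tfrac1M A_k^\top A_k$ explicitly. The $m$-th row of $A_k$ is an i.i.d.\ copy of the random vector whose $n$-th entry is $\phi_n^{k}(\mathbf{W}_k)+\nabla\phi_n^{k}(\mathbf{W}_k)\cdot\Delta\mathbf{W}_k$, so the strong law of large numbers yields, almost surely and entrywise,
\[
\tfrac1M\big(A_k^\top A_k\big)_{n,n'}\longrightarrow
G_{n,n'}:=\mathbb{E}\!\left[\big(\phi_n^{k}+\nabla\phi_n^{k}\cdot\Delta\mathbf{W}_k\big)\big(\phi_{n'}^{k}+\nabla\phi_{n'}^{k}\cdot\Delta\mathbf{W}_k\big)\right].
\]
Since the condition number is invariant under positive scaling, it suffices to compute $\operatorname{cond}(G)$ and then transfer the limit through continuity of the spectrum of symmetric matrices.

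The crux is that $G$ is diagonal. I would exploit two facts: the increment $\Delta\mathbf{W}_k$ is independent of $\mathbf{W}_k$ with $\mathbb{E}[\Delta\mathbf{W}_k]=\mathbf{0}$ and $\mathbb{E}[\Delta\mathbf{W}_k\,\Delta\mathbf{W}_k^\top]=\Delta t\,I_d$; and the polynomials $\{\phi_n^{k}\}$ are orthonormal in $L^2_{\omega_{t_k}}(\R^d)$. Expanding and taking expectations, the two cross terms factor as an expectation in $\mathbf{W}_k$ dotted with $\mathbb{E}[\Delta\mathbf{W}_k]=\mathbf{0}$ and hence vanish; the pure term gives $\mathbb{E}[\phi_n^{k}\phi_{n'}^{k}]=\delta_{n,n'}$; and the gradient term reduces, by independence and the covariance of $\Delta\mathbf{W}_k$, to $\Delta t\,\mathbb{E}[\nabla\phi_n^{k}\cdot\nabla\phi_{n'}^{k}]$. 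Writing $\phi_n^{k}=H_{\bm\alpha}^{(t_k)}$, $\phi_{n'}^{k}=H_{\bm\gamma}^{(t_k)}$ and invoking the derivative identity \cref{eq:derivative_herm} together with orthonormality, one finds $\mathbb{E}[\nabla\phi_n^{k}\cdot\nabla\phi_{n'}^{k}]=\delta_{\bm\alpha,\bm\gamma}\sum_{j=1}^d \alpha_j/t_k=\delta_{n,n'}\,|\bm\alpha|/t_k$. As $t_k=k\Delta t$, this gives the diagonal matrix
\[
G=\operatorname{diag}\Big(1+\tfrac{|\bm\alpha|}{k}:\ \bm\alpha\in I\Big).
\]

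Finally, the eigenvalues of $G$ are its diagonal entries, so $\operatorname{cond}(G)=\big(\max_{\bm\alpha\in I}(1+|\bm\alpha|/k)\big)\big/\big(\min_{\bm\alpha\in I}(1+|\bm\alpha|/k)\big)$. The denominator is $1$, attained at $\bm\alpha=\mathbf0\in I$. The one genuinely nontrivial point is the combinatorial maximization $\max_{\bm\alpha\in I}|\bm\alpha|=p+d-1$: maximizing $\sum_j\alpha_j$ under $\prod_j\max(\alpha_j,1)\le p$ follows from the elementary bound $\sum_{i=1}^r a_i\le \prod_{i=1}^r a_i+(r-1)$ for reals $a_i\ge1$ (induction, the step being $(P'-1)(a_r-1)\ge0$ with $P'=\prod_{i<r}a_i$), so with at most $r\le d$ nonzero entries $|\bm\alpha|\le p+d-1$, equality holding for $\bm\alpha=(p,1,\dots,1)\in I$. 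Hence $\operatorname{cond}(G)=1+(p+d-1)/k$. To conclude, since $G$ is positive definite the extreme eigenvalues of $\tfrac1M A_k^\top A_k$ converge a.s.\ to those of $G$ by continuity of the spectrum under the entrywise convergence above, and $\lambda_{\min}(G)\ge1>0$ guarantees invertibility for large $M$; scale invariance of the condition number then gives $\lim_{M\to\infty}\operatorname{cond}(A_k^\top A_k)=\operatorname{cond}(G)=1+(p+d-1)/k$ almost surely. I expect the combinatorial maximization to be the only delicate step, the probabilistic and linear-algebraic parts being routine once orthonormality and \cref{eq:derivative_herm} are in hand.
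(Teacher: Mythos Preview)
Your proposal is correct and follows essentially the same route as the paper: pass to the limit via the law of large numbers, use independence of $\Delta\mathbf{W}_k$ from $\mathbf{W}_k$ together with orthonormality and the derivative identity \eqref{eq:derivative_herm} to show the limiting Gram matrix is $\operatorname{diag}(1+|\bm\alpha|/k)$, and read off the condition number. Your treatment is in fact slightly more complete than the paper's, since you supply the combinatorial justification for $\max_{\bm\alpha\in I}|\bm\alpha|=p+d-1$ and explicitly invoke continuity of the spectrum to transfer the limit to $\operatorname{cond}$, both of which the paper leaves implicit.
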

\begin{proof}
By definition, the $(i,j)$ entry of matrix $A_k^\top A_k$ is given by 
\begin{equation} \label{eq:entry-matrix}
    (A_k^\top A_k)_{i,j} = \sum_{m=1}^M \left( \phi_i^k(\mathbf{W}_k^m) + \nabla \phi_i^k(\mathbf{W}_k^m)\cdot \Delta \mathbf{W}_k^m\right) \left( \phi_j^k(\mathbf{W}_k^m) + \nabla \phi_j^k(\mathbf{W}_k^m)\cdot \Delta \mathbf{W}_k^m\right).
\end{equation}
By the vanilla Monte Carlo approximation, we obtain 
\begin{align}
        &\quad \lim_{M\to\infty} \frac{1}{M}\sum_{m=1}^M \left( \phi_i^k(\mathbf{W}_k^m) + \nabla \phi_i^k(\mathbf{W}_k^m)\cdot \Delta \mathbf{W}_k^m\right) \left( \phi_j^k(\mathbf{W}_k^m) + \nabla \phi_j^k(\mathbf{W}_k^m)\cdot \Delta \mathbf{W}_k^m\right)\nonumber \\
        &= \mathbb{E}\left[ \phi_i^k(\mathbf{W}_k)\phi_j^k(\mathbf{W}_k) + \nabla \phi_i^k(\mathbf{W}_k)\cdot \nabla\phi_j^k(\mathbf{W}_k) (\Delta \mathbf{W}_k)^2 \right]\nonumber \\
        &\quad + \mathbb{E}\left[ \left(\phi_i^k(\mathbf{W}_k) \nabla\phi_j^k(\mathbf{W}_k) + \phi_j^k(\mathbf{W}_k) \nabla\phi_i^k(\mathbf{W}_k) \right)\cdot \Delta \mathbf{W}_k \right] a.s. \nonumber \\
        &=  \mathbb{E}\left[ \phi_i^k(\mathbf{W}_k)\phi_j^k(\mathbf{W}_k) + \nabla \phi_i^k(\mathbf{W}_k)\cdot\nabla\phi_j^k(\mathbf{W}_k) (\Delta \mathbf{W}_k)^2 \right],\label{eq:entry-form}
\end{align}
where the last equality is due to the vanishing mean of Brownian motion increment $\mathbb{E}[\Delta \mathbf{W}_k] = 0$.

Furthermore, since $\phi_i^k$ and $\phi_j^k$ are generalized and normalized Hermite polynomials, then $\mathbb{E}[\phi_i^k(\mathbf{W}_k)\phi_j^k(\mathbf{W}_k)] = \delta_{ij}$. Then the identity $\mathbb{E}[(\Delta \mathbf{W}_k)^2] = \Delta t$ and \eqref{eq:basis-derivative} lead to 
\begin{equation*}
    \mathbb{E}[\nabla \phi_i^k(\mathbf{W}_k)\cdot\nabla\phi_j^k(\mathbf{W}_k) (\Delta \mathbf{W}_k)^2] = \bE\left[\sum_{\ell = 1}^d \partial_\ell \phi_i^k(\mathbf{W}_k)\partial_\ell \phi_j^k(\mathbf{W}_k)\right]\Delta t = 0,\quad \text{if }i\ne j.
\end{equation*}
This, together with \eqref{eq:entry-matrix} and \eqref{eq:entry-form}, implies that if $i\ne j$, then $\lim_{M\to\infty}(A_k^\top A_k)_{i,j} = 0$ a.s., i.e., the off-diagonal entries are nearly vanishing when $M\to \infty$.
To estimate the diagonal entries, let $\phi_i^k(\mathbf{W_k}) = H_{\bm{\alpha}}^{(t_k)}$ for some multi-index $\bm{\alpha}$. Using the first order derivative formula \eqref{eq:basis-derivative} and $t_k = k\Delta t$, we derive 
\begin{equation} \label{eq:element-AA}
    \begin{aligned}
        \bE\left[\sum_{\ell = 1}^d \partial_\ell \phi_i^k(\mathbf{W}_k)\partial_\ell \phi_i^k(\mathbf{W}_k)\right]\Delta t &= \Delta t\sum_{\ell = 1}^d \bE\left[ \left(\partial_\ell H_{\bm{\alpha}}^{(t_k)}(\mathbf{W}_k)\right)^2 \right] \\
        &=\Delta t\sum_{\ell = 1}^d \frac{\alpha_\ell}{t_k}\bE\left[ \left( H_{\bm{\alpha} - \mathbf{e}_\ell}^{(t_k)}(\mathbf{W}_k) \right)^2 \right] = \Delta t\frac{|\bm{\alpha}|}{t_k} = \frac{|\bm{\alpha}|}{k}.
    \end{aligned}
\end{equation}
By combining \eqref{eq:element-AA} with \eqref{eq:entry-form}, we obtain an estimate of the diagonal entry, \[\lim_{M\to\infty}\frac{1}{M}(A_k^\top A_k)_{i,i}= (1 + \frac{|\bm{\alpha}|}{k})\; a.s.
\] Here, the involved approximation error is only due to the Monte Carlo method. Thus, the matrix $A_k^\top A_k$ is approximately a diagonal matrix when the sample size $M$ is large. Hence, the definition of condition number implies
\begin{equation*}
     \lim_{M\to\infty}\operatorname{cond}(A_k^\top A_k)=\left(1 + \frac{\max_{\bm{\alpha \in I}}|\bm{\alpha}|}{k}\right) \Big/ \left(1 + \frac{\min_{\bm{\alpha \in I}}|\bm{\alpha}|}{k}\right) = 1 + \frac{p+d-1}{k} a.s.
\end{equation*}
Here, $I$ is the hyperbolic index set defined in \eqref{eq:HC-defi}. This proves the assertion.
\end{proof}

\begin{remark}\label{rmk:complexity}
We give a detailed comparison of the proposed G-LSM method with the numerical scheme in \cite{gobet2005regression} for solving BSDEs in terms of computational complexity. The scheme in \cite{gobet2005regression} solves the least squares problem 
    \begin{equation} \label{eq:ls-GLW05}
        \min_{\alpha_0, \alpha_1, \dots, \alpha_d} \frac{1}{M}\sum_{m=1}^M \left( Y_{t_{k+1}}^{N, I, I, M, m} - \alpha_0\cdot p_{0,k}^m - \sum_{l=1}^d \alpha_l \cdot p_{l,k}^m \Delta W_{l,k}^m \right)^2,
    \end{equation}
where each component of the BSDE solution $(Y_{t_k}, Z_{1, t_k}, \dots, Z_{d, t_k})$ is approximated by the linear combination of distinct basis functions, i.e., $\alpha_l\cdot p_{l,k}$. Suppose each approximation uses $N_b$ basis functions. Then $\alpha_l$ is a vector of length $N_b$ for each $l\in \{0, 1,\dots, d\}$. In total, the least squares problem \eqref{eq:ls-GLW05} involves $(d+1)\times N_b$ unknowns for a $d$-dimensional problem.  In contrast, the least squares problem in \eqref{eq: empirical risk} is equivalent to 
    \begin{equation} \label{eq:ls-glsm}
        \min_{\beta} \frac{1}{M}\sum_{m=1}^M \left( \hat{u}_{k+1}(\mathbf{W}_{k+1}^m) - \beta \cdot (A_k)_{m,\cdot} \right)^2,
    \end{equation}
    where only the CVF is approximated by the linear combination of sparse Hermite basis functions, i.e, the unknown $\beta \cdot \phi^k$ with $\beta\in \mathbb{R}^{N_b}$ and $\phi^k$ being the vector of basis functions. Here, $(A_k)_{m,\cdot}$ is the $m$-th row of the matrix $A_k$ in the linear system, which can be efficiently calculated by Algorithm 2. To the best of our knowledge, this idea is new in literature. We can see that the least squares problem \eqref{eq:ls-glsm} includes $N_b$ unknowns for a $d$-dimensional problem. Compared with $(d+1)\times N_b$ unknowns in the system \eqref{eq:ls-GLW05}, the least squares problem \eqref{eq:ls-glsm} incurs much lower computational complexity. Furthermore, the empirical regression matrix for the problem \eqref{eq:ls-GLW05} may not be invertible when the sample size becomes large as mentioned in \cite[page 2178]{gobet2005regression}, while the regression matrix for \eqref{eq:ls-glsm} is well-conditioned for large sample size as proved in Theorem \ref{thm:cond}.  In summary, in comparison with the scheme in \cite{gobet2005regression}, the proposed G-LSM method has $d$ times less computational complexity for  a $d$-dimensional problem, and  enjoys guaranteed stability when $M\to \infty$.
\end{remark}

\subsection{Computing deltas}
Now we present the approximation of deltas for constructing hedging strategies based on \cref{alg: sparse hermite polynomial expansions}. In the backward induction loop, \cref{eq: backward induction} and \cref{eq: defi greeks} imply the deltas at time $t_k$ is given by
\begin{equation*}
    \frac{\partial v_k(\mathbf{s})}{\partial s_j} = \left\{
        \begin{aligned}
            \frac{\partial g_k(\mathbf{s})}{\partial s_j}, \text{ if exercise}, \\
            \frac{\partial {\tilde{c}}_k(\mathbf{s})}{\partial s_j}, \text{ if continue}, 
        \end{aligned}
    \right.\quad j = 1,\dots, d.
\end{equation*}
In the continuation region, $\tilde{c}_k(\mathbf{s}) = c_k(\mathbf{w})$ with $\mathbf{s} = \mathbf{s}_0 \odot \exp(Q(\mu t_k + \sqrt{\Lambda}\mathbf{w}))$.
Once we have obtained the coefficients $\bm{\beta}_k$ of the expansion, the gradient $\nabla c_k(\mathbf{w})$ of the continuation value $c_k(\mathbf{w})$ is approximated by $\nabla \hat{c}_k(\mathbf{w})$ in \cref{eq: gradient of approx cv}. Hence, we calculate the deltas in the continuation region by
\begin{equation*}
    \frac{\partial {\tilde{c}}_k(\mathbf{s})}{\partial s_j} \approx \sum_{i=1}^d \frac{\partial \hat{c}_k(\mathbf{w})}{\partial w_i}\frac{\partial w_i}{\partial s_j}, \quad k = 1,\dots, {N-1}.
\end{equation*}
The gammas $\Gamma_{t_k}^{ij}$ can be approximated in a similar way.

Particularly, the Greeks at time $t=0$ involves solving an additional linear system, which arises from minimizing the $L_{\omega_{t_1}}^2(\R^d)$ error of approximating $u_{1}(\mathbf{W}_1)$. We employ a slightly different ansatz from the case of $t_k>0$, since the expansions in $H_{\bm{\alpha}}^{(0)}$ is not applicable. Instead, we take 
\begin{equation*}
    \hat{c}_0(\mathbf{w}) = \sum_{\bm{\alpha}\in I} \beta_{\bm{\alpha}} H_{\bm{\alpha}}^{(\Delta t)}(\mathbf{w}).
\end{equation*}
Since $c_0(\mathbf{W}_0)$ is a deterministic value given $\mathbf{W}_0 = \mathbf{0}$, we consider the Euler approximation
\begin{equation*}
    u_1(\mathbf{W}_1) \approx \hat{c}_0(\mathbf{0}) + \nabla \hat{c}_0(\mathbf{0}) \cdot \mathbf{W}_1.
\end{equation*}
Similar to \cref{eq: empirical risk}, the coefficients $\beta_{\bm{\alpha}}$ can be approximated by solving an $M\times N_b$ linear system. After that, the delta at $t=0$ is approximated by
\begin{equation*}
    \frac{\partial {\tilde{c}}_0(\mathbf{s}_0)}{\partial s_j} \approx \sum_{i=1}^d \frac{\partial \hat{c}_0(\mathbf{0})}{\partial w_i}\frac{\partial w_i}{\partial s_j}(\mathbf{s}_0),\quad \mbox{with }\mathbf{s} = \mathbf{s}_0 \odot  \exp(Q\sqrt{\Lambda}\mathbf{w}).
\end{equation*}

\section{Convergence analysis} \label{sec:convergence}
Now we analyze the convergence of \cref{alg: sparse hermite polynomial expansions}. 
We restrict the theoretical analysis to the multi-asset Black-Scholes model presented in \cref{sec:multi model} since the corresponding CVF can be reformulated as a function of independent Brownian motions and the associated best approximation error in the sparse Hermite polynomial ansatz space $P_{I,k}$ can be rigorously established, cf.  \cref{lem:best-error-HC}. In practice, \cref{alg: sparse hermite polynomial expansions} can be applied to general models driven by It\^{o} diffusions.  See  \cref{subsec:heston} for one numerical example related to the Heston model.
We assume the Lipschitz continuity of the discounted payoff function $g_k(\cdot)$.
\begin{assumption}\label{assu:lip}
    The discounted payoff function $g_k(\cdot)$ is $L$-Lipschitz continuous: for  $k = 0,1,\dots,N$
    \begin{equation} \label{eq: Lipschitz assum}
        |g_k(\mathbf{s}) - g_k(\mathbf{s}')| \le L\|\mathbf{s} - \mathbf{s}'\|, \quad \forall \mathbf{s}, \mathbf{s}' \in \R^d.
    \end{equation}
\end{assumption}
Recall that $\hat{u}_{k}(\mathbf{w})$ is the numerical value function at time $t_k$ computed by \cref{alg: sparse hermite polynomial expansions}, which approximates the exact value ${u}_{k}(\mathbf{w})$ for $k = 0,1,\dots,N-1$. We aim to establish an upper bound for
$\max_{0\le k\le N-1} \|u_k -\hat{u}_k\|_{{L}^2_{\omega_k}(\R^d)}^2$.

\subsection{One-step error estimation}
First, we analyze the one step error over the interval $[t_k,t_{k+1}]$ for $k = 0,1,\dots,N-1$, i.e., the numerical value function $\hat{u}_{k}(\mathbf{w})$ as an approximation to the exact one ${u}_{k}(\mathbf{w})$. Given the previous value function $\hat{u}_{k+1}$, we define the current CVF  $ \check{c}_k(\mathbf{W}_k)$ by
\begin{equation}\label{eq:apprx-conti}
    \check{c}_k(\mathbf{W}_k) = \mathbb{E}[\hat{u}_{k+1}(\mathbf{W}_{k+1}) | \mathbf{W}_k].
\end{equation}
The classic backward Euler scheme for BSDE approximates $\nabla\check{c}_k$ by
\begin{equation}\label{eq:grad-est}
     \check{Z}_k:=\frac{1}{\Delta t} \mathbb{E}[\hat{u}_{k+1}(\mathbf{W}_{k+1})\Delta \mathbf{W}_k | \mathbf{W}_k].
\end{equation}
With the sparse Hermite polynomial ansatz space $P_{I,k}$, let $c_k^* \in P_{I,k}$ be the best approximation to $\check{c}_k$ in $H^1_{\omega_k}(\mathbb{R}^d)$ defined by
\begin{equation}\label{eq:best-approx}
    c^*_k := \argmin_{\psi \in P_{I,k}} \|\check{c}_k - \psi\|_{H^1_{\omega_k}(\R^d)}.
\end{equation}
Also we define the best approximation error $\mathcal{E}_k^{\rm best}$ by
\begin{equation} \label{eq: defi best approx}
    \mathcal{E}_k^{\rm best} := \|\check{c}_k - c_k^*\|_{H^1_{\omega_k}(\R^d)}^2.
\end{equation}
Next, we denote the statistical error of solving the discrete least squares problem \cref{eq: empirical risk} by
\begin{align}\label{eq:stat-lsm}
\mathcal{E}^{\rm stat}_k := \|\hat{c}_k - c_k^{\op{CLS}}\|_{{L}^2_{\omega_k}(\R^d)}^2,
\end{align}
where $c_k^{\rm CLS}$ solves the continuous least squares problem \cref{eq: expected risk} for $k = 0,1,\dots,N-1$.
In \cref{thm: one-step error} below, we derive the convergence of $c_k^{\op{CLS}}$ to the exact $c_k$ provided that the previous approximation error $\|u_{k+1} - \hat{u}_{k+1}\|_{L_{\omega_{k+1}}^2(\R^d)}^2$,  $\mathcal{E}_k^{\rm best}$ in \cref{eq: defi best approx}, and time step size $\Delta t$ are small. We first provide in \cref{lem:best-error-HC} an estimate of $\mathcal{E}_k^{\rm best}$ in \cref{eq: defi best approx}, which is a direct application of \cite[Theorem 4.2]{luo2018error}.

\begin{lemma} \label{lem:best-error-HC}
    For the hyperbolic cross index set $I$ with maximum order $p$ defined in \cref{eq:HC-defi}, given $\check{c}_k \in \mathcal{K}^m_{\omega_k}(\R^d)$, we have
    \begin{equation*}
        \mathcal{E}_k^{\rm best} \le C(m, d)\frac{1}{p^{m-1}} |\check{c}_k|_{\mathcal{K}^m_{\omega_k}(\R^d)}^2,
    \end{equation*}
    where $\mathcal{K}^m_{\omega_k}(\R^d)$ is the weighted Koborov-type space defined by
    \begin{equation*}
        \mathcal{K}^m_{\omega_k}(\R^d) = \{u: \partial^{\bm{\alpha}} u \in {L}^2_{\omega_k}(\R^d), 0\le |\bm{\alpha}|_\infty \le m\},
    \end{equation*}
    and $|\cdot|_{\mathcal{K}^m_{\omega_k}(\R^d)}$ is the seminorm defined by
    \begin{equation*}
        |u|_{\mathcal{K}^m_{\omega_k}(\R^d)} = \left( \sum_{|\bm{\alpha}|_\infty = m} \|\partial^{\bm{\alpha}} u\|_{{L}^2_{\omega_k}(\R^d)}^2 \right)^{1/2}.
    \end{equation*}
\end{lemma}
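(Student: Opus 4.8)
The plan is to derive the bound as a corollary of the hyperbolic-cross Hermite approximation estimate \cite[Theorem 4.2]{luo2018error}; the only genuine work is to reconcile that result's conventions with the present weighted $H^1_{\omega_k}$ setting. Since $c_k^*$ is by definition the $H^1_{\omega_k}$-best approximant in $P_{I,k}$, it suffices to bound $\|\check c_k - \psi\|^2_{H^1_{\omega_k}(\R^d)}$ for any convenient $\psi\in P_{I,k}$. I would take $\psi=\Pi_I\check c_k$, the $L^2_{\omega_k}$-orthogonal truncation of the Hermite expansion $\check c_k = \sum_{\bm{\alpha}\in\N_0^d}\hat c_{\bm{\alpha}}H^{(t_k)}_{\bm{\alpha}}$ to the indices $\bm{\alpha}\in I$, so that $\mathcal{E}_k^{\rm best}\le\|\sum_{\bm{\alpha}\notin I}\hat c_{\bm{\alpha}}H^{(t_k)}_{\bm{\alpha}}\|^2_{H^1_{\omega_k}(\R^d)}$.

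First I would turn this tail norm into a weighted sum of squared coefficients. Applying the derivative identity \eqref{eq:derivative_herm} termwise and using orthonormality of $\{H^{(t_k)}_{\bm{\alpha}}\}$ in $L^2_{\omega_k}$ --- exactly the computation already carried out for the diagonal entries in \eqref{eq:element-AA} --- gives
\begin{equation*}
\mathcal{E}_k^{\rm best}\le\sum_{\bm{\alpha}\notin I}\Big(1+\frac{|\bm{\alpha}|}{t_k}\Big)|\hat c_{\bm{\alpha}}|^2.
\end{equation*}
Next I would invoke the regularity $\check c_k\in\mathcal{K}^m_{\omega_k}(\R^d)$: membership controls the weighted coefficient sum whose weights grow like $\big(\prod_{j}\max(\alpha_j,1)\big)^{2m}$, and on the complement of the hyperbolic cross \eqref{eq:HC-defi} one has $\prod_{j}\max(\alpha_j,1)>p$. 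Combining these with the elementary bound $|\bm{\alpha}|\le d\prod_{j}\max(\alpha_j,1)$ lets one extract a factor $p^{-(m-1)}$ together with the seminorm $|\check c_k|^2_{\mathcal{K}^m_{\omega_k}(\R^d)}$, which is precisely the content of \cite[Theorem 4.2]{luo2018error}.

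The main obstacle --- and the reason this is not the cited theorem verbatim --- is the presence of the $H^1_{\omega_k}$ (rather than $L^2_{\omega_k}$) norm together with the $t_k$-dependence hidden in the weight. The first-derivative terms multiply each coefficient by $\sqrt{\alpha_j/t_k}$, so one must verify that including them degrades the squared-error rate by exactly one power of $p$ while the constant still depends only on $m$ and $d$. I expect the cleanest route is a change of variables $\mathbf{y}\mapsto\sqrt{t_k}\,\mathbf{z}$ standardizing $\omega_{t_k}$ to the unit Gaussian, under which \cite[Theorem 4.2]{luo2018error} applies directly; one then checks that the non-homogeneous scaling of the $H^1$ seminorm under this map is absorbed into $C(m,d)$, so that the stated $t_k$-independent constant is legitimate.
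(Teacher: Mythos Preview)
Your proposal is correct and takes essentially the same approach as the paper: the paper offers no proof at all for this lemma, merely stating that it ``is a direct application of \cite[Theorem 4.2]{luo2018error}.'' You invoke the same cited result and, in addition, sketch the two reconciliation steps the paper leaves implicit --- passing from the $L^2_{\omega_k}$ estimate to the $H^1_{\omega_k}$ best-approximation error, and handling the $t_k$-scaling via the standardizing change of variables. The one point I would flag is your final parenthetical claim that the non-homogeneous scaling is ``absorbed into $C(m,d)$'': under $\mathbf{y}\mapsto\sqrt{t_k}\,\mathbf{z}$ the gradient part of the $H^1_{\omega_k}$ norm picks up a factor $t_k^{-1}$ while the terms in $|\cdot|_{\mathcal{K}^m_{\omega_k}}$ pick up factors $t_k^{-|\bm{\gamma}|}$ with $m\le|\bm{\gamma}|\le md$, so the constant is genuinely $t_k$-independent only after one argues these mismatched powers cancel (or appeals to $t_k\le T$). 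The paper does not address this either, so your proposal is at least as complete as the paper's own treatment.
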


The next result provides a one-step error estimate of the $Z$ component in the BSDE by the backward Euler scheme \cref{eq:grad-est}. The proof is inspired by the idea of \cite[pages 816-817]{gobet2007error}.
By martingale representation theorem, the definition of $\check{c}_k$ \cref{eq:apprx-conti} implies the existence of a square integrable process $\tilde{Z}_s$, $t_k\le s\le t_{k+1}$, such that
\begin{equation} \label{eq: martingale representation}
    \hat{u}_{k+1}(\mathbf{W}_{k+1}) = \check{c}_k(\mathbf{W}_k) + \int_{t_k}^{t_{k+1}} \tilde{Z}_t \cdot \dd \mathbf{W}_t.
\end{equation}
\begin{lemma} \label{lem:one-step Z}
Let $\check{Z}_k$ and $\tilde{Z}_t$ be defined in \cref{eq:grad-est} and \cref{eq: martingale representation}. Then for some constant $C_1>0$, \begin{equation*}
        \mathbb{E}\left[ (\check{Z}_k - \tilde{Z}_{t_k})^2 \right] \le C_1 \Delta t^2.
    \end{equation*}
\end{lemma}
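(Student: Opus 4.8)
The plan is to exploit the fact that the BSDE in \cref{thm: linear bsde} carries no generator, so that the martingale integrand in \cref{eq: martingale representation} is the spatial gradient $\tilde Z_t=\nabla_{\mathbf w}f(t,\mathbf W_t)$ of a solution to the backward heat equation \cref{eq:para-terminal}, and then to compare $\check Z_k$ with $\tilde Z_{t_k}$ through the conditioning $\mathbb E[\,\cdot\mid\mathbf W_k]$, which annihilates the dominant martingale fluctuation and leaves only a second-order-in-$\Delta t$ remainder. The reference \cite[pages 816--817]{gobet2007error} supplies the template: express the $Z$-component as a conditional time-average, isolate its drift by conditioning, and control that drift by the diffusion's regularity.

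First I would recast $\check Z_k$ intrinsically. Inserting \cref{eq: martingale representation} into \cref{eq:grad-est} and writing $\Delta\mathbf W_k=\int_{t_k}^{t_{k+1}}\dd\mathbf W_t$, the contribution of $\check c_k(\mathbf W_k)$ vanishes because $\mathbb E[\Delta\mathbf W_k\mid\mathbf W_k]=0$, while the It\^o isometry together with the independence of the Brownian components ($\langle W^i,W^j\rangle_t=\delta_{ij}t$) collapses the cross terms, giving
\[
\check Z_k=\frac1{\Delta t}\,\mathbb E\Big[\int_{t_k}^{t_{k+1}}\tilde Z_t\,\dd t\,\Big|\,\mathbf W_k\Big],\qquad\text{so}\qquad
\check Z_k-\tilde Z_{t_k}=\frac1{\Delta t}\int_{t_k}^{t_{k+1}}\mathbb E\big[\tilde Z_t-\tilde Z_{t_k}\,\big|\,\mathbf W_k\big]\,\dd t,
\]
using that $\tilde Z_{t_k}=\nabla_{\mathbf w}f(t_k,\mathbf W_k)$ is $\mathbf W_k$-measurable. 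Applying It\^o's formula to $\tilde Z_t=\nabla_{\mathbf w}f(t,\mathbf W_t)$ and differentiating \cref{eq:para-terminal}, each $\partial_{w_j}f$ solves the same backward heat equation, so the martingale increment is $\int_{t_k}^{t}\operatorname{Hess}_{\mathbf w}f(s,\mathbf W_s)\,\dd\mathbf W_s$ plus a drift $\int_{t_k}^t b(s,\mathbf W_s)\,\dd s$ with $b=(\partial_t+\tfrac12\Delta)\nabla_{\mathbf w}f$. Conditioning on $\mathbf W_k$ removes the martingale integral, which is exactly what produces the extra power of $\Delta t$ relative to a naive Jensen bound, and by (stochastic) Fubini
\[
\check Z_k-\tilde Z_{t_k}=\frac1{\Delta t}\,\mathbb E\Big[\int_{t_k}^{t_{k+1}}(t_{k+1}-s)\,b(s,\mathbf W_s)\,\dd s\,\Big|\,\mathbf W_k\Big].
\]
Jensen followed by Cauchy--Schwarz against the weight $(t_{k+1}-s)$ then yields
$\mathbb E[(\check Z_k-\tilde Z_{t_k})^2]\le \tfrac{1}{\Delta t^2}\cdot\tfrac{\Delta t^3}{3}\int_{t_k}^{t_{k+1}}\mathbb E[|b(s,\mathbf W_s)|^2]\,\dd s$, which is $\mathcal O(\Delta t^2)$ as soon as $\mathbb E[|b(s,\mathbf W_s)|^2]\le C$; in the present no-generator setting $b=\nabla_{\mathbf w}(\partial_t+\tfrac12\Delta)f=0$, so the drift is in fact benign.

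The main obstacle will be making this chain rigorous near the terminal time $t_{k+1}$, where the second (and, for the drift term, third) derivatives of $f$ may blow up because the datum $\hat u_{k+1}$ is only Lipschitz, which is also what must be controlled to justify that $\tilde Z_t$ is a genuine martingale up to $t_{k+1}$ and that the conditional cancellation is exact. To handle it I would use the heat-semigroup (Malliavin-calculus) representation $f(s,\cdot)=\hat u_{k+1}\!*\rho_{t_{k+1}-s}$, moving one derivative onto the Lipschitz datum, controlled by the constant $L$ from \cref{assu:lip}, and the remaining derivatives onto the Gaussian kernel, together with the $H^m_\omega$ regularity of the continuation value established in \cite{yang2023optionpricing}, so as to obtain an integrable-in-$s$ bound on the relevant derivative moments. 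The delicate points, mirroring \cite[pages 816--817]{gobet2007error}, are the rigorous justification of the stochastic Fubini interchange, the retention of the conditional expectation throughout (dropping it would degrade the rate to $\mathcal O(\Delta t)$), and the integrability of the Hessian moment up to $t_{k+1}$; once these are secured the weighted Cauchy--Schwarz estimate above delivers the claimed $\mathbb E[(\check Z_k-\tilde Z_{t_k})^2]\le C_1\Delta t^2$.
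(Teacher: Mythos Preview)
Your approach is correct and runs closely parallel to the paper's: both follow the template of \cite[pages~816--817]{gobet2007error}. The paper reaches the time-average representation via Malliavin integration by parts, writing $\check Z_k=\mathbb E[\nabla\hat u_{k+1}(\mathbf W_{k+1})\mid\mathbf W_k]$ (using $D_t\mathbf W_{k+1}=I$), rather than via the It\^o-isometry route you take; it then subtracts $\tilde Z_{t_k}=\nabla\check c_k(\mathbf W_k)$ and invokes \cite[Equation~(26)]{gobet2007error} as a black box to obtain $\check Z_k-\tilde Z_{t_k}=\int_{t_k}^{t_{k+1}}\mathbb E[G(t,\mathbf W_t)\mid\mathbf W_k]\,\dd t$ for a bounded $G$, from which $|\check Z_k-\tilde Z_{t_k}|=\mathcal O(\Delta t)$ follows immediately.

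Your version is more self-contained and, in fact, sharper: by computing the drift explicitly you notice that $b=(\partial_t+\tfrac12\Delta)\nabla_{\mathbf w}f\equiv 0$ in this pure-Brownian, no-generator setting, so that $\tilde Z_t$ is a martingale and $\check Z_k=\tilde Z_{t_k}$ exactly---equivalently, the paper's $G$ is identically zero here. The regularity concerns you flag are legitimate, but they can be sidestepped without Hessian control: the Clark--Ocone identity $\tilde Z_t=\mathbb E[D_t\hat u_{k+1}(\mathbf W_{k+1})\mid\mathcal F_t]=\mathbb E[\nabla\hat u_{k+1}(\mathbf W_{k+1})\mid\mathcal F_t]$ exhibits $\tilde Z_t$ directly as a conditional-expectation martingale, so the conditional cancellation $\mathbb E[\tilde Z_t-\tilde Z_{t_k}\mid\mathbf W_k]=0$ holds without applying It\^o to $\nabla_{\mathbf w}f$, and the weighted Cauchy--Schwarz step becomes unnecessary. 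One small caveat: your appeal to \cref{assu:lip} for a global Lipschitz bound on the datum is not quite right, since that assumption is on $g_k(\mathbf s)$ in the $\mathbf s$-variable, whereas the relevant datum $\hat u_{k+1}(\mathbf w)$ is a maximum of a polynomial and an exponential-composite and is not globally Lipschitz in $\mathbf w$; this does not affect the argument once you use the martingale route above, but the heat-kernel derivative bound you sketch would need a different justification.
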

\begin{proof}
By the integration by parts formula of Malliavin calculus, we obtain
    \begin{equation*}
        \check{Z}_k = \frac{1}{\Delta t} \mathbb{E}\left[ \hat{u}_{k+1}(\mathbf{W}_{k+1}) \Delta \mathbf{W}_k | \mathbf{W}_k \right] = \frac{1}{\Delta t} \mathbb{E}\left[ \int_{t_k}^{t_{k+1}} D_t\mathbf{W}_{k+1} \nabla \hat{u}_{k+1}(\mathbf{W}_{k+1}) \,\dd t \bigg| \mathbf{W}_k \right],
    \end{equation*}
    where $D_t \mathbf{W}_{k+1}$ is an identity matrix of size $d\times d$.
Together with the identity
    \begin{equation*}
        \tilde{Z}_{t_k} = \nabla \check{c}_k(\mathbf{W}_k) = \frac{1}{\Delta t}\int_{t_k}^{t_{k+1}} \nabla \check{c}_k(\mathbf{W}_k) \,\dd t,
    \end{equation*}
it further leads to
    \begin{align*}
        \check{Z}_k - \tilde{Z}_{t_k} &= \frac{1}{\Delta t}\mathbb{E}\left[ \int_{t_k}^{t_{k+1}} \nabla \hat{u}_{k+1}(\mathbf{W}_{k+1}) - \nabla \check{c}_k(\mathbf{W}_k) \,\dd s \bigg| \mathbf{W}_k \right].
    \end{align*}
Then, by \cite[Equation (26)]{gobet2007error}, we obtain
    \begin{equation*}
        \check{Z}_k - \tilde{Z}_{t_k} = \int_{t_k}^{t_{k+1}} \mathbb{E}\left[ G(t, \mathbf{W}_t) |\mathbf{W}_k \right] \,\dd t,
    \end{equation*}
    for a bounded function $G$. Hence, there holds $|\check{Z}_k - \tilde{Z}_{t_k}| = \mathcal{O}(\Delta t)$, and there exists a constant $C_1>0$ such that $\mathbb{E}\left[ (\check{Z}_k - \tilde{Z}_{t_k})^2) \right] \le C_1 \Delta t^2$.
\end{proof}

Next, we provide an error estimate of solving the continuous least squares problem \cref{eq: expected risk} in terms of the error of the previous value function, the best approximation error in the sparse Hermite polynomial ansatz space \cref{eq: defi best approx} and the time step size $\Delta t$. The proof is inspired by the foundational work \cite{hure2020deep}. We first derive a one-step error propagation by using Young's inequality and then apply discrete Gronwall's inequality to establish global convergence. In contrast to the approach in \cite{hure2020deep}, our analysis also incorporates the statistical error introduced by the Monte Carlo method and the one-step error proven in \cref{lem:one-step Z}. Moreover, the best approximation error of sparse Hermite polynomial approximation is given in \cref{lem:best-error-HC}.
\begin{theorem} \label{thm: one-step error}
For $\Delta t \le 1/4$, with the constant $C_1>0$ in \cref{lem:one-step Z}, there holds
    \begin{equation*}
        \|c_k - c_k^{\op{CLS}}\|_{{L}^2_{\omega_k}(\R^d)}^2 \le (1 + 4 \Delta t) \|u_{k+1} - \hat{u}_{k+1}\|_{{L}^2_{\omega_{k+1}}(\R^d)}^2 + \tfrac{1}{2\Delta t} \mathcal{E}_k^{\rm best} + C_1 \Delta t^2.
    \end{equation*}
\end{theorem}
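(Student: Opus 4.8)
The plan is to control the exact fitting target $c_k$ in two stages: first replace $c_k$ by the empirical continuation value $\check c_k$ of \cref{eq:apprx-conti}, and then show that the continuous least-squares solution $c_k^{\op{CLS}}$ stays close to $\check c_k$. For the first stage I would apply the elementary inequality $(a+b)^2\le(1+4\Delta t)a^2+(1+\tfrac{1}{4\Delta t})b^2$ (a form of Young's inequality) to the splitting $c_k-c_k^{\op{CLS}}=(c_k-\check c_k)+(\check c_k-c_k^{\op{CLS}})$, giving
\begin{equation*}
\|c_k-c_k^{\op{CLS}}\|_{L^2_{\omega_k}(\R^d)}^2\le (1+4\Delta t)\|c_k-\check c_k\|_{L^2_{\omega_k}(\R^d)}^2+\Big(1+\tfrac{1}{4\Delta t}\Big)\|\check c_k-c_k^{\op{CLS}}\|_{L^2_{\omega_k}(\R^d)}^2.
\end{equation*}
Since $c_k-\check c_k=\mathbb{E}[u_{k+1}(\mathbf{W}_{k+1})-\hat u_{k+1}(\mathbf{W}_{k+1})\mid\mathbf{W}_k]$, Jensen's inequality (the conditional expectation is an $L^2$ contraction, and $\mathbf{W}_{k+1}$ carries the weight $\omega_{k+1}$) bounds the first term by $\|u_{k+1}-\hat u_{k+1}\|_{L^2_{\omega_{k+1}}(\R^d)}^2$, which already produces the leading term of the claim.

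The heart of the argument is an exact orthogonal decomposition of the loss $E_k$. Writing $R:=\hat u_{k+1}(\mathbf{W}_{k+1})-\check c_k(\mathbf{W}_k)-\check Z_k\cdot\Delta\mathbf{W}_k$, the definitions of $\check c_k$ in \cref{eq:apprx-conti} and $\check Z_k$ in \cref{eq:grad-est}, together with $\mathbb{E}[\Delta\mathbf{W}_k\mid\mathbf{W}_k]=0$ and $\mathbb{E}[\Delta\mathbf{W}_k\Delta\mathbf{W}_k^\top\mid\mathbf{W}_k]=\Delta t\,I_d$, give the conditional orthogonality relations $\mathbb{E}[R\mid\mathbf{W}_k]=0$ and $\mathbb{E}[R\,\Delta\mathbf{W}_k\mid\mathbf{W}_k]=0$. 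Expanding $\hat u_{k+1}-\psi(\mathbf{W}_k)-\nabla\psi(\mathbf{W}_k)\cdot\Delta\mathbf{W}_k=R+(\check c_k-\psi)(\mathbf{W}_k)+(\check Z_k-\nabla\psi(\mathbf{W}_k))\cdot\Delta\mathbf{W}_k$ and squaring, all cross terms vanish, so that for every $\psi\in P_{I,k}$
\begin{equation*}
E_k(\psi)=\mathbb{E}[R^2]+\|\check c_k-\psi\|_{L^2_{\omega_k}(\R^d)}^2+\Delta t\,\mathbb{E}\big[|\check Z_k-\nabla\psi(\mathbf{W}_k)|^2\big].
\end{equation*}
Because $\mathbb{E}[R^2]$ is independent of $\psi$, the optimality $E_k(c_k^{\op{CLS}})\le E_k(c_k^*)$ and dropping the nonnegative gradient term attached to $c_k^{\op{CLS}}$ yield $\|\check c_k-c_k^{\op{CLS}}\|_{L^2_{\omega_k}(\R^d)}^2\le\|\check c_k-c_k^*\|_{L^2_{\omega_k}(\R^d)}^2+\Delta t\,\mathbb{E}[|\check Z_k-\nabla c_k^*(\mathbf{W}_k)|^2]$.

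It remains to estimate this right-hand side by $\mathcal{E}_k^{\rm best}$ and powers of $\Delta t$. For the gradient term I would split $\check Z_k-\nabla c_k^*=(\check Z_k-\nabla\check c_k)+(\nabla\check c_k-\nabla c_k^*)$ and use $(a+b)^2\le 2a^2+2b^2$; the first piece is exactly $\check Z_k-\tilde Z_{t_k}$ (recall $\tilde Z_{t_k}=\nabla\check c_k(\mathbf{W}_k)$), controlled by \cref{lem:one-step Z} as $O(\Delta t^2)$, while the second contributes, componentwise in $L^2_{\omega_k}(\R^d)$, only the first-order part of $\|\check c_k-c_k^*\|_{H^1_{\omega_k}(\R^d)}^2=\mathcal{E}_k^{\rm best}$. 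Combining the value part $\|\check c_k-c_k^*\|_{L^2_{\omega_k}(\R^d)}^2$ with the $\Delta t$-weighted gradient part and invoking $2\Delta t\le 1$ to absorb the weight into the full $H^1_{\omega_k}(\R^d)$ norm gives $\|\check c_k-c_k^{\op{CLS}}\|_{L^2_{\omega_k}(\R^d)}^2\le\mathcal{E}_k^{\rm best}+2C_1\Delta t^3$. Finally I would multiply by $1+\tfrac{1}{4\Delta t}$ and use $\Delta t\le\tfrac14$, under which $1+\tfrac{1}{4\Delta t}\le\tfrac{1}{2\Delta t}$ and $(1+\tfrac{1}{4\Delta t})\,2C_1\Delta t^3\le C_1\Delta t^2$, assembling exactly the stated inequality. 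The delicate point—and the only place where genuine structure is used—is the loss decomposition: one must recognize that $\check Z_k$ is precisely the quantity rendering the residual $R$ orthogonal to both constants and $\Delta\mathbf{W}_k$, since this is what reduces the gradient-enhanced least-squares problem to a simultaneous $\Delta t$-weighted fit of $(\check c_k,\check Z_k)$ and lets \cref{lem:one-step Z} and \cref{lem:best-error-HC} enter. The bookkeeping with the Young parameter and the threshold $\Delta t\le 1/4$ is then routine.
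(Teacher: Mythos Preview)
Your proposal is correct and follows essentially the same route as the paper: the Young-type splitting with parameter $4\Delta t$, the Jensen bound on $c_k-\check c_k$, the loss decomposition yielding $E_k(\psi)=\text{const}+\|\check c_k-\psi\|_{L^2_{\omega_k}}^2+\Delta t\,\mathbb{E}[|\check Z_k-\nabla\psi|^2]$, the optimality comparison with $c_k^*$, and the final use of \cref{lem:one-step Z} together with $2\Delta t\le 1$ are all exactly what the paper does. The one cosmetic difference is that you derive the loss decomposition via the conditional orthogonality of $R$ to $1$ and to $\Delta\mathbf{W}_k$, whereas the paper substitutes the martingale representation \cref{eq: martingale representation} and applies It\^o isometry; these are equivalent (indeed $R=\int_{t_k}^{t_{k+1}}(\tilde Z_s-\check Z_k)\cdot\dd\mathbf{W}_s$, so $\mathbb{E}[R^2]$ is precisely the paper's $\psi$-independent term), and your version is arguably a touch more elementary.
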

\begin{proof}
By the inequality $(a+b)^2 \le (1+ 4 \Delta t) a^2 + (1+\frac{1}{4 \Delta t})b^2$, we have
\begin{align*}
    \|c_k - c_k^{\op{CLS}}\|_{{L}^2_{\omega_k}(\R^d)}^2 &\le (1+4 \Delta t) \|c_k - \check{c}_k\|_{{L}^2_{\omega_k}(\R^d)}^2 + (1+\tfrac{1}{4\Delta t}) \|\check{c}_k - c_k^{\rm CLS}\|_{{L}^2_{\omega_k}(\R^d)}^2 \\
    &\le (1+4 \Delta t) \|c_k - \check{c}_k\|_{{L}^2_{\omega_k}(\R^d)}^2 + \tfrac{1}{2\Delta t} \|\check{c}_k - c_k^{\op{CLS}}\|_{{L}^2_{\omega_k}(\R^d)}^2,
\end{align*}
since $\Delta t \le 1/4$. For the first term, the definitions \cref{eq: continuation value W} and \cref{eq:apprx-conti}, the tower property, and Jensen's inequality lead to
\begin{equation*}
\begin{aligned}
    \mathbb{E}\left[ \left( c_k(\mathbf{W}_k) - \check{c}_k(\mathbf{W}_k) \right)^2 \right] &= \mathbb{E}\left[ \left(\mathbb{E}\left[ u_{k+1}(\mathbf{W}_{k+1}) - \hat{u}_{k+1}(\mathbf{W}_{k+1})|\mathbf{W}_k \right] \right)^2 \right] \\
    &\le \mathbb{E}\left[ \left(u_{k+1}(\mathbf{W}_{k+1}) - \hat{u}_{k+1}(\mathbf{W}_{k+1}) \right)^2 \right] = \|u_{k+1} - \hat{u}_{k+1}\|_{{L}^2_{\omega_{k+1}}(\R^d)}^2.
\end{aligned}
\end{equation*}
Hence, it remains to show
\begin{equation} \label{eq:remains-to-show}
    \|\check{c}_k - c_k^{\op{CLS}}\|_{{L}^2_{ \omega_{k}}(\R^d)}^2 \le \mathcal{E}_k^{\rm best} + 2C_1\Delta t^3.
\end{equation}
By plugging \cref{eq: martingale representation} into the quadratic loss function $E_k(\cdot)$ in \cref{eq: expected risk}, we obtain
\begin{align*}
    E_k(\psi) &= \mathbb{E}\left[ \left( \check{c}_k(\mathbf{W}_k) - \psi(\mathbf{W}_k) + \int_{t_k}^{t_{k+1}} \tilde{Z}_s\cdot \dd \mathbf{W}(s) - \nabla \psi(\mathbf{W}_k)\cdot \Delta \mathbf{W}_k \right)^2 \right] \\
    &= \mathbb{E}\left[ \left( \check{c}_k(\mathbf{W}_k) - \psi(\mathbf{W}_k) \right)^2 \right] + \mathbb{E}\left[ \left( \int_{t_k}^{t_{k+1}} \tilde{Z}_s\cdot \dd \mathbf{W}(s) - \nabla \psi(\mathbf{W}_k)\cdot \Delta \mathbf{W}_k \right)^2 \right].
\end{align*}
Now It\^{o} isometry implies the identity
\begin{align*}
    & \mathbb{E}\left[ \left( \int_{t_k}^{t_{k+1}} \tilde{Z}_s\cdot \dd \mathbf{W}(s) - \nabla \psi(\mathbf{W}_k)\cdot \Delta \mathbf{W}_k \right)^2 \right] \\
    =& \mathbb{E}\left[ \left( \int_{t_k}^{t_{k+1}} \tilde{Z}_s\cdot \dd \mathbf{W}(s) - \int_{t_k}^{t_{k+1}} \check{Z}_k \cdot \dd \mathbf{W}(s) + \check{Z}_k\cdot \Delta \mathbf{W}_k - \nabla \psi(\mathbf{W}_k)\cdot \Delta \mathbf{W}_k \right)^2 \right]\\
    =& \mathbb{E}\left[ \int_{t_k}^{t_{k+1}} |\tilde{Z}_s - \check{Z}_k|^2\,\dd s \right] + \Delta t \mathbb{E}\left[ \left|\check{Z}_k - \nabla \psi(\mathbf{W}_k) \right|^2 \right] \\
    &+ 2\mathbb{E}\left[ \int_{t_k}^{t_{k+1}} (\tilde{Z}_s - \check{Z}_k)\,\dd s \right] \cdot \mathbb{E}[\check{Z}_k - \nabla \psi(\mathbf{W}_k)].
\end{align*}
The definitions of $\check{Z}_k$ and $\tilde{Z}_s$ in \cref{eq:grad-est} and \cref{eq: martingale representation}, together with It\^{o} isometry, yield
\begin{align*}
\check{Z}_k = \frac{1}{\Delta t} \mathbb{E}\left[ \int_{t_k}^{t_{k+1}} \tilde{Z}_s \,\dd s \bigg| \mathbf{W}_k \right],
\end{align*}
which implies
\begin{align*}
\mathbb{E}\left[ \int_{t_k}^{t_{k+1}} (\tilde{Z}_s - \check{Z}_k)\,\dd s \right] = 0.
\end{align*}
Hence, we can express the loss function $E_k(\psi)$ as
\begin{equation*}
    E_k(\psi) = \mathbb{E}\left[ \left( \check{c}_k(\mathbf{W}_k) - \psi(\mathbf{W}_k) \right)^2 \right] + \mathbb{E}\left[ \int_{t_k}^{t_{k+1}} |\tilde{Z}_s - \check{Z}_k|^2\,\dd s \right] + \Delta t \mathbb{E}\left[ \left|\check{Z}_k - \nabla \psi(\mathbf{W}_k) \right|^2 \right].
\end{equation*}
The equality \cref{eq: expected risk} implies that $E_k(c_k^{\rm CLS}) \le E_k(\psi)$ for all $\psi\in P_{I,k}$. By taking $\psi := c_k^*$ as defined in \cref{eq:best-approx}, we obtain
\begin{align*}
    &\quad \mathbb{E}\left[ \left( \check{c}_k(\mathbf{W}_k) - c_k^{\rm CLS}(\mathbf{W}_k) \right)^2 \right] + \Delta t \mathbb{E}\left[ \left(\check{Z}_k - \nabla c_k^{\rm CLS}(\mathbf{W}_k) \right)^2 \right] \\
    &\le \mathbb{E}\left[ \left( \check{c}_k(\mathbf{W}_k) - c_k^*(\mathbf{W}_k) \right)^2 \right] + \Delta t \mathbb{E}\left[ \left(\check{Z}_k - \nabla c_k^*(\mathbf{W}_k) \right)^2 \right].
\end{align*}
Note that $\tilde{Z}_{t_k} = \nabla \check{c}_k(\mathbf{W}_k)$. By subtracting and adding $\tilde{Z}_{t_k}$, we have
\begin{align*}
    &\quad \|\check{c}_k - c_k^{\op{CLS}}\|_{{L}^2_{\omega_{k}}(\R^d)}^2 \\
    &\le \|\check{c}_k - c_k^*\|_{{L}^2_{\omega_{k}}(\R^d)}^2 + 2\Delta t \mathbb{E}\left[ (\check{Z}_k - \tilde{Z}_{t_k})^2 \right] + 2\Delta t \mathbb{E}\left[ |\nabla \check{c}_k(\mathbf{W}_k) - \nabla c_k^*(\mathbf{W}_k)|^2 \right] \\
    &\le \|\check{c}_k - c_k^*\|_{H^1_{ \omega_{k}}(\R^d)}^2 + 2C_1\Delta t^3,
\end{align*}
where the last inequality follows from the condition $2\Delta t \le 1$ and \cref{lem:one-step Z}. This shows the estimate \cref{eq:remains-to-show}, and completes the proof of the theorem.
\end{proof}

Now, we can give the one-step error propagation of $\|u_k - \hat{u}_k\|_{{L}^2_{\omega_k}(\R^d)}^2$ given $\|u_{k+1} - \hat{u}_{k+1}\|_{{L}^2_{\omega_{k+1}}(\R^d)}^2$.
\begin{corollary} \label{cor: one-step error}
For $\Delta t\le 1/6$, there holds
    \begin{equation*}
        \|u_k - \hat{u}_k\|_{{L}^2_{\omega_k}(\R^d)}^2 \le (1+6 \Delta t) \|u_{k+1} - \hat{u}_{k+1}\|_{{L}^2_{\omega_{k+1}}(\R^d)}^2 + \left(1 + \frac{1}{2\Delta t}\right) \mathcal{E}_k^{\rm best} + \frac{6}{5}C_1\Delta t^2 + \frac{6}{5\Delta t} \mathcal{E}_k^{\rm stat},
    \end{equation*}
    where $\mathcal{E}_k^{\rm best}$ and $\mathcal{E}_k^{\rm stat}$ are defined in \cref{eq: defi best approx} and \cref{eq:stat-lsm}, and $C_1>0$ is the constant in \cref{lem:one-step Z}.
\end{corollary}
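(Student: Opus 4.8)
The plan is to convert the continuous-least-squares estimate of \cref{thm: one-step error} into a bound on the value-function error in two moves: a contraction estimate exploiting the pointwise structure of the value update, followed by a Young-type splitting that introduces the statistical error $\mathcal{E}_k^{\rm stat}$. First I would establish the pointwise contraction
\begin{equation*}
    |u_k(\mathbf{w}) - \hat{u}_k(\mathbf{w})| \le |c_k(\mathbf{w}) - \hat{c}_k(\mathbf{w})| \quad \text{a.s.},
\end{equation*}
where $\hat{c}_k$ is the numerical continuation value produced by \cref{alg: sparse hermite polynomial expansions}. The exact value is $u_k = \max\{g_k, c_k\}$ by the backward induction \cref{eq: backward induction}, and the numerical update \cref{eq:num-val} gives $\hat{u}_k = \max\{g_k, \hat{c}_k\}$ on $\Omega_{\op{ITM}}$; off $\Omega_{\op{ITM}}$ one has $g_k = 0$, so that $u_k = c_k$ and $\hat{u}_k = \hat{c}_k$. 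In both regions the inequality follows from the $1$-Lipschitz property of the map $t\mapsto \max\{g_k,t\}$. Integrating against $\omega_k$ then yields $\|u_k - \hat{u}_k\|_{L^2_{\omega_k}(\R^d)}^2 \le \|c_k - \hat{c}_k\|_{L^2_{\omega_k}(\R^d)}^2$.

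Next I would decompose $c_k - \hat{c}_k = (c_k - c_k^{\op{CLS}}) + (c_k^{\op{CLS}} - \hat{c}_k)$ and apply Young's inequality $(a+b)^2 \le (1+\Delta t)a^2 + (1 + \tfrac{1}{\Delta t})b^2$. Since $\mathcal{E}_k^{\rm stat} = \|\hat{c}_k - c_k^{\op{CLS}}\|_{L^2_{\omega_k}(\R^d)}^2$ by \cref{eq:stat-lsm}, this gives
\begin{equation*}
    \|u_k - \hat{u}_k\|_{L^2_{\omega_k}(\R^d)}^2 \le (1+\Delta t)\,\|c_k - c_k^{\op{CLS}}\|_{L^2_{\omega_k}(\R^d)}^2 + \Big(1 + \tfrac{1}{\Delta t}\Big)\mathcal{E}_k^{\rm stat}.
\end{equation*}
Substituting the bound of \cref{thm: one-step error} for $\|c_k - c_k^{\op{CLS}}\|_{L^2_{\omega_k}(\R^d)}^2$ and using $\Delta t \le 1/6$, I would verify that every prefactor is dominated by the claimed one: $(1+\Delta t)(1+4\Delta t) = 1+5\Delta t+4\Delta t^2 \le 1+6\Delta t$ (as $4\Delta t^2 \le \tfrac{2}{3}\Delta t$), $(1+\Delta t)\tfrac{1}{2\Delta t} \le 1+\tfrac{1}{2\Delta t}$, $(1+\Delta t)C_1\Delta t^2 \le \tfrac{7}{6}C_1\Delta t^2 \le \tfrac{6}{5}C_1\Delta t^2$, and $1+\tfrac{1}{\Delta t} \le \tfrac{6}{5\Delta t}$ (equivalent to $\Delta t \le \tfrac{1}{5}$). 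These reproduce the asserted estimate.

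The main obstacle is the contraction step: the numerical exercise region, determined by comparing $\hat{c}_k$ with $g_k$ on $\Omega_{\op{ITM}}$, generally differs from the exact one. The key observation making the step robust is that $u_k$ and $\hat{u}_k$ are maxima of the \emph{same} exercise value $g_k$ with their respective continuation values, so the $1$-Lipschitz property of $t\mapsto\max\{g_k,t\}$ neutralizes any mismatch of the two regions, while the off-$\Omega_{\op{ITM}}$ case is settled separately by $g_k=0$. Everything after this is routine Young's inequality and bookkeeping of constants under $\Delta t \le 1/6$.
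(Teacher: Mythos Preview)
Your proof is correct and follows essentially the same route as the paper: reduce $\|u_k-\hat u_k\|$ to $\|c_k-\hat c_k\|$ via the $1$-Lipschitz property of $t\mapsto\max\{g_k,t\}$, then split $c_k-\hat c_k$ through $c_k^{\op{CLS}}$ with a Young-type inequality and invoke \cref{thm: one-step error}. The only cosmetic differences are that the paper applies the lower-bound form $(a+b)^2\ge(1-\Delta t)a^2-\tfrac{1}{\Delta t}b^2$ and divides by $1-\Delta t$ (whereas you use the upper-bound form $(a+b)^2\le(1+\Delta t)a^2+(1+\tfrac{1}{\Delta t})b^2$), and it performs the contraction step last rather than first; your more explicit handling of the off-$\Omega_{\op{ITM}}$ region is in fact slightly more careful than the paper's one-line appeal to $|\max(a,b)-\max(a,c)|\le|b-c|$.
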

\begin{proof}
First, using inequality $(a+b)^2 \ge (1-\Delta t)a^2 - \frac{1}{\Delta t}b^2$ for any $a,b\in\mathbb{R}$, leads to
\begin{equation*}
    \|c_k - c_k^{\op{CLS}}\|_{{L}^2_{\omega_k}(\R^d)}^2 \ge (1-\Delta t) \|c_k - \hat{c}_k\|_{{L}^2_{\omega_k}(\R^d)}^2 - \frac{1}{\Delta t}\|\hat{c}_k - c_k^{CLS}\|_{{L}^2_{\omega_k}(\R^d)}^2.
\end{equation*}
This and \cref{thm: one-step error} yield
\begin{align*}
    &\quad \|c_k - \hat{c}_k\|_{{L}^2_{\omega_k}(\R^d)}^2 \\
    &\le \frac{1+4\Delta t}{1-\Delta t} \|u_{k+1} - \hat{u}_{k+1}\|_{{L}^2_{\omega_{k+1}}(\R^d)}^2 + \frac{1}{2\Delta t(1-\Delta t)} \mathcal{E}_k^{\rm best} + \frac{C_1\Delta t^2}{1-\Delta t} + \frac{1}{\Delta t(1-\Delta t)} \mathcal{E}_k^{\rm stat} \\
    &\le (1 + 6\Delta t) \|u_{k+1} - \hat{u}_{k+1}\|_{{L}^2_{\omega_{k+1}}(\R^d)}^2 + \left(1 + \frac{1}{2\Delta t} \right) \mathcal{E}_k^{\rm best} + \frac{6}{5}C_1\Delta t^2 + \frac{6}{5\Delta t} \mathcal{E}_k^{\rm stat},
\end{align*}
provided that $\Delta t\le 1/6$.
Finally, using inequality $|\max(a,b) - \max(a,c)| \le |b-c|$, we derive
\begin{equation*}
    \|u_k - \hat{u}_k\|_{{L}^2_{\omega_k}(\R^d)}^2 = \|\max(G_k, c_k) - \max(G_k, \hat{c}_k)\|_{{L}^2_{\omega_k}(\R^d)}^2 \le \|c_k - \hat{c}_k\|_{{L}^2_{\omega_k}(\R^d)}^2.
\end{equation*}
This completes the proof.
\end{proof}

\subsection{Global error estimation}
Finally, we prove a global error estimate.
\begin{theorem} \label{thm: global error}
For $\Delta t \le 1/6$, there holds
    \begin{equation*}
        \max_{0\le k\le N-1} \|u_k - \hat{u}_k\|_{{L}^2_{\omega_k}(\R^d)}^2 \le C\left( \Delta t +  N\sum_{k=0}^{N-1} \left( \mathcal{E}_k^{\rm best} + \mathcal{E}_k^{\rm stat} \right) \right),
    \end{equation*}
    where the constant $C = \max\left(\frac{6}{5T}e^{6T}, \frac{6T}{5}C_1 e^{6T} \right)$ only depends on the finite time horizon $T$ and the constant $C_1>0$ defined in \cref{lem:one-step Z}.
\end{theorem}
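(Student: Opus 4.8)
The plan is to treat the recursion in \cref{cor: one-step error} as a discrete Gronwall inequality and unroll it down to the exact terminal value. Writing $e_k := \|u_k - \hat{u}_k\|_{{L}^2_{\omega_k}(\R^d)}^2$, I would first record that $e_N = 0$, since \cref{alg: sparse hermite polynomial expansions} initializes $\hat{u}_N$ to the exact discounted payoff, so that $\hat{u}_N = u_N = g_N$. \cref{cor: one-step error} then reads $e_k \le (1+6\Delta t)e_{k+1} + r_k$ for $k = 0,\dots,N-1$, where I set $r_k := (1 + \tfrac{1}{2\Delta t})\mathcal{E}_k^{\rm best} + \tfrac{6}{5}C_1\Delta t^2 + \tfrac{6}{5\Delta t}\mathcal{E}_k^{\rm stat}$, with $C_1>0$ the constant from \cref{lem:one-step Z}.

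Next I would iterate the recursion. Substituting repeatedly and terminating at $e_N = 0$ gives $e_k \le \sum_{j=k}^{N-1}(1+6\Delta t)^{j-k} r_j$. The amplification factor is controlled uniformly in $k$: since $N\Delta t = T$ and $1 + x \le e^x$, one has $(1+6\Delta t)^{j-k}\le (1+6\Delta t)^{N}\le e^{6\Delta t N} = e^{6T}$ for all $0\le k\le j\le N-1$. Because every $r_j \ge 0$, this yields $e_k \le e^{6T}\sum_{j=0}^{N-1} r_j$, the right-hand side now being independent of $k$.

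It then remains to bound $\sum_{j=0}^{N-1} r_j$ and match the constants. Using $\Delta t \le 1/6 \le 1/2$ gives $1 + \tfrac{1}{2\Delta t}\le \tfrac{1}{\Delta t}$, so the best-approximation contribution is at most $\tfrac{1}{\Delta t}\sum_j \mathcal{E}_j^{\rm best}$; the statistical contribution is exactly $\tfrac{6}{5\Delta t}\sum_j \mathcal{E}_j^{\rm stat}$; and the time-stepping term sums to $N\cdot \tfrac{6}{5}C_1\Delta t^2 = \tfrac{6T}{5}C_1\Delta t$ by $N\Delta t = T$. Converting $\tfrac{1}{\Delta t} = N/T$, the first two sums become $\tfrac{1}{T}\,N\sum_j \mathcal{E}_j^{\rm best}$ and $\tfrac{6}{5T}\,N\sum_j\mathcal{E}_j^{\rm stat}$. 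Multiplying through by $e^{6T}$ and noting that each scalar prefactor satisfies $\tfrac{1}{T}e^{6T}\le\tfrac{6}{5T}e^{6T}$, that the statistical prefactor equals $\tfrac{6}{5T}e^{6T}$, and that the time-stepping prefactor equals $\tfrac{6T}{5}C_1 e^{6T}$, I would absorb all three into $C := \max\big(\tfrac{6}{5T}e^{6T},\, \tfrac{6T}{5}C_1 e^{6T}\big)$ to obtain $e_k \le C\big(\Delta t + N\sum_{j=0}^{N-1}(\mathcal{E}_j^{\rm best} + \mathcal{E}_j^{\rm stat})\big)$. Taking the maximum over $k$ delivers the stated estimate.

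Once \cref{cor: one-step error} is in hand the argument is essentially bookkeeping, so I do not anticipate a genuine obstacle; the only points demanding care are (i) checking $1 + \tfrac{1}{2\Delta t}\le \tfrac{1}{\Delta t}$ under $\Delta t \le 1/6$ and (ii) verifying that the product of one-step factors telescopes to the $N$-independent bound $e^{6T}$ rather than blowing up with $N$. It is worth flagging that the factor $N$ appearing in the final estimate is precisely the price of the $1/\Delta t$ weights attached to $\mathcal{E}_k^{\rm best}$ and $\mathcal{E}_k^{\rm stat}$, which is the structurally essential feature of the bound.
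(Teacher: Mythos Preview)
Your proposal is correct and follows essentially the same route as the paper: invoke \cref{cor: one-step error}, unroll the recursion (the paper phrases this as the discrete Gronwall inequality), use $(1+6\Delta t)^N\le e^{6T}$ and $e_N=0$, and then convert the $1/\Delta t$ weights to a factor $N/T$. The only cosmetic difference is that the paper bounds $1+\tfrac{1}{2\Delta t}\le \tfrac{6}{5\Delta t}$ directly, whereas you bound it by $\tfrac{1}{\Delta t}$ and then absorb $\tfrac{1}{T}\le \tfrac{6}{5T}$ at the end; both arrive at the same constant $C$.
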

\begin{proof}
    For $\Delta t\le 1/6$, \cref{cor: one-step error} implies
    \begin{equation*}
        \|u_k - \hat{u}_k\|_{{L}^2_{\omega_k}(\R^d)}^2 \le (1+6 \Delta t) \|u_{k+1} - \hat{u}_{k+1}\|_{{L}^2_{\omega_{k+1}}(\R^d)}^2 + \frac{6}{5\Delta t} \left( \mathcal{E}_k^{\rm best} + \mathcal{E}_k^{\rm stat} \right) + \frac{6}{5}C_1\Delta t^2.
    \end{equation*}
    Using the discrete Gronwall's inequality and $\Delta t = T/N$, we obtain
    \begin{equation*}
        \max_{0\le k\le N-1} \|u_k - \hat{u}_k\|_{{L}^2_{\omega_k}(\R^d)}^2 \le e^{6T} \|u_N - \hat{u}_N\|_{{L}^2_{\omega_N}(\R^d)}^2 + e^{6T} \frac{6N}{5T} \sum_{k=0}^{N-1} \left( \mathcal{E}_k^{\rm best} + \mathcal{E}_k^{\rm stat} \right) + e^{6T} \frac{6T}{5}C_1 \Delta t.
    \end{equation*}
    Since the terminal condition $u_N$ is known, the result follows by $\|u_N - \hat{u}_N\|_{{L}_{\omega_N}^2(\R^d)} = 0$.
\end{proof}

\Cref{thm: global error} implies that the numerical value function $\hat{u}_{k}$ computed by \cref{alg: sparse hermite polynomial expansions} at each time $t_k$ approximates the exact one ${u}_{k}$ well if $\Delta t$, $\mathcal{E}_k^{\rm best}$, and $\mathcal{E}_k^{\rm stat}$ are small. Using \cref{lem:best-error-HC}, the best approximation error $\mathcal{E}_k^{\rm best}$ decays geometrically with the increase of the order $p$ of sparse Hermite polynomials. By the law of large numbers, the statistical error
$\mathcal{E}_k^{\rm stat}$ between the discrete and the continuous least squares will be small when using a large number of sample paths.

\section{Numerical examples}\label{sec:examples}
In this section, we present several examples of high-dimensional Bermudan option pricing to show the efficiency and accuracy of \cref{alg: sparse hermite polynomial expansions}. In practice, the proposed method is applicable to other problems, e.g., high-dimensional optimal stopping problems, provided that the underlying dynamics follow an It\^{o} process.

The codes for the numerical experiments
can be founded in the GitHub repository \url{https://github.com/jiefeiy/glsm-american}. The accuracy of the computed prices $\hat{v}_0(\mathbf{s}_0)$ and deltas $\nabla \hat{v}_0(\mathbf{s}_0)$ are measured by the relative errors defined by
\begin{equation*}
    \frac{|\hat{v}_0(\mathbf{s}_0) - v_0^\dagger|}{|v_0^\dagger|} \times 100\% \text{ and } \frac{\|\nabla \hat{v}_0(\mathbf{s}_0) - \Delta_0^\dagger\|}{\|\Delta_0^\dagger\|} \times 100\%,
\end{equation*}
respectively, where $v_0^\dagger$ and $\Delta_0^\dagger$ are exact price and delta at time $t=0$. Unless otherwise stated, the results are the average of 10 independent runs. The parameter settings of the examples are summarized in \cref{tab: parameter set}, which has been considered previously \cite{yang2023optionpricing,sirignano2018dgm, chen2021deep, becker2019deep,fang2011fourier}. The least squares problem is solved using the built-in MATLAB function \texttt{cgs}. The maximum order $p$ of the hyperbolic cross index set is one important hyper-parameter. In the experiments, we gradually increase the maximum order such that the pricing results do not change significantly, following the standard practice in sparse polynomial approximation \cite{MR4257879}. See Table \ref{tab:geobaskput_order} for the study on the pricing results with respect to the maximum order $p$. In practice, a higher-dimensional problem often needs a lower maximum order of hyperbolic cross index set, which might be related to the intrinsic low dimensional structure of high dimensional functions.

\begin{table}[htbp]
    \centering
    \caption{The parameters used in Examples 1-5. Examples 3 and 4 share the parameters except the volatility $\sigma_i$.}
    \label{tab: parameter set}
   {\footnotesize{
    \begin{tabular}{ll}
        \hline
        Example & Parameters \\
        \hline
        1. Geometric basket put & $K = 100, T = 0.25, r = 0.03, \delta_i = 0, \sigma_i = 0.2,\rho_{ij} = 0.5, N = 50$ \\
        2. Geometric basket call & $K = 100, T = 2, r = 0, \delta_i = 0.02, \sigma_i = 0.25, \rho_{ij} = 0.75, N=50$ \\
        3. Max-call with $d$ symmetric assets & $K = 100, T = 3, r = 0.05, \delta_i = 0.1, \sigma_i = 0.2, \rho_{ij} = 0, N=9$ \\
        4. Max-call with $d$ asymmetric assets & $
            \sigma_i = \begin{cases}
            0.08 + 0.32\times(i-1)/(d-1), \text{ if } d\le 5 \\
            0.1 + i/(2d), \text{ if } d> 5
        \end{cases}$ \\
        5. Put option under Heston model & \shortstack{$K=10, T=0.25, r=0.1, v_0 = 0.0625, \rho = 0.1,$ \\ $\kappa = 5, \theta = 0.16, \nu = 0.9, N=50$} \\
        \hline
    \end{tabular}  
  }}
\end{table}

\subsection{Example 1: Bermudan geometric basket put}
Geometric basket options are benchmark tests for high-dimensional option pricing problems, since they can be reduced to one-dimensional problems, and thus highly accurate prices are available. Indeed, the price of the $d$-dimensional problem equals that of the one-dimensional American option with initial price $\hat s_0$, volatility $\hat\sigma$, and dividend yield $\hat\delta$ given respectively by
\begin{equation*}
    \hat{s}_0 = \Big(\prod_{i=1}^d s_0^i\Big)^{1/d}, \quad \hat{\sigma} = \frac{1}{d}\sqrt{\sum_{i,j}\sigma_i \sigma_j \rho_{ij}}, \quad \hat{\delta} = \frac{1}{d}\sum_{i=1}^d \Big( \delta_i + \frac{\sigma_i^2}{2}\Big) - \frac{\hat{\sigma}^2}{2}.
\end{equation*} 

We consider the example of Bermudan geometric basket put from \cite{kovalov2007pricing, yang2023optionpricing}. The exact prices are computed by solving the reduced one-dimensional problem via a quadrature and interpolation-based method \cite{yang2023optionpricing} for Bermudan options. We present in \cref{tab: geobaskput glsm lsm} the computed option prices and their relative errors using G-LSM and LSM, with the same ansatz space for the CVF. The results show that G-LSM achieves higher accuracy than LSM in high dimensions: G-LSM has a relative error $0.55\%$ for $d=15$, which is almost ten times smaller than that by LSM. That is, by incorporating the gradient information, the accuracy of LSM can be substantially improved.

\begin{table}[htbp]
    \centering
    \caption{The price and relative errors of Bermudan geometric basket put computed by LSM and G-LSM using $M=100,000$ samples, with $s_0^i = 100$ and $p=10$.}
    \label{tab: geobaskput glsm lsm}
    {\footnotesize{
    \begin{tabular}{ccccccc}
        \hline
        $d$ & $N_b$ & LSM & error & G-LSM & error & $v_0^\dagger$ \\
        \hline
        $1$ & 11 & 3.6715 & 0.16\% & 3.6532 & 0.34\% & 3.6658 \\
        $2$ & 29 & 3.1886 & 0.18\% & 3.1791 & 0.12\% & 3.1831 \\
        $3$ & 56 & 3.0113 & 0.28\% & 2.9896 & 0.44\% & 3.0029 \\
        $5$ & 141 & 2.8700 & 0.70\% & 2.8381 & 0.41\% & 2.8499 \\
        $10$ & 581 & 2.8030 & 2.71\% & 2.7160 & 0.48\% & 2.7290 \\
        $15$ & 1446 & 2.8258 & 5.15\% & 2.6726 & 0.55\% & 2.6874 \\
        \hline
    \end{tabular}  
    }}
\end{table}

\Cref{tab:geobaskput_order} gives the computed option prices and their relative errors by G-LSM with different maximum polynomial orders $p$ for the $20$-dimensional geometric basket put. The relative error decays steadily as the order $p$ of Hermite polynomials increases, which agrees with \cref{thm: global error}.

\begin{table}[htbp]
    \centering
    \caption{The computed prices and their relative errors by G-LSM for $d=20$ with polynomial order $p$ and $M=100,000$ samples.  $v_0^\dagger = 2.6664$.}
    \label{tab:geobaskput_order}
    {\footnotesize{
    \begin{tabular}{ccccc}
        \hline
        $p$ & 2 & 5 & 10 & 12 \\
        \hline
        $\hat{v}_0(\mathbf{s}_0)$ & 2.6389 & 2.6414 & 2.6529 & 2.6634 \\
        error & 1.03\% & 0.94\% & 0.50\% & 0.11\% \\
        \hline 
    \end{tabular}
    }}
\end{table}

\subsection{Example 2: American geometric basket call}
Now we consider the example of American geometric basket call option from \cite{sirignano2018dgm, chen2021deep} to demonstrate that the proposed G-LSM can achieve the same level of accuracy as the DNN-based method \cite{chen2021deep}, and take $M = 720,000$ samples  as in \cite{chen2021deep}. The prices and deltas are given in \cref{tab: price geobaskcall} and \cref{tab: deltas geobaskcall}, respectively, where the results of \cite{chen2021deep} are the average of 9 independent runs. From \cref{tab: price geobaskcall}, both methods have similar accuracy for the price. From \cref{tab: deltas geobaskcall}, the relative error of delta using G-LSM and DNN varies slightly with the dimension $d$. This is probably because the DNN-based method computes the delta via a sample average, while G-LSM uses the derivatives of the value function directly. The relative error of the delta computed by G-LSM increases slightly for larger dimensions, possibly due to the small magnitude of the exact delta values.

\begin{table}[htbp]
    \centering
    \caption{Prices of American geometric basket call at $t=0$ using $M = 720,000$ samples.}
    \label{tab: price geobaskcall}
    {\footnotesize{
    \begin{tabular}{cccc cc c cc}
        \hline
         &  & & & \multicolumn{2}{c}{G-LSM} && \multicolumn{2}{c}{DNN \cite{chen2021deep}}   \\
         \cline{5-6} \cline{8-9}
        $d$ & $s_0^i$ & $p$ & exact & price & error && price & error  \\
        \hline
        7 & 100 & 10 & 10.2591 & 10.2475 & 0.11\% && 10.2468 & 0.12\%  \\
        13 & 100 & 10 & 10.0984 & 10.0781 & 0.20\% && 10.0822 & 0.16\%  \\
        20 & 100 & 10 & 10.0326 & 10.0141 & 0.18\% && 10.0116 & 0.21\%  \\
        100 & 100 & 6 & 9.9345 &  9.8980  & 0.37\%  && 9.9163 & 0.18\%  \\
        \hline 
    \end{tabular}
    }}
\end{table}

\begin{table}[htbp]
    \centering
    \caption{Deltas of American geometric basket call at $t=0$ using $M = 720,000$ samples. $\mathbf{1} = (1, 1, \dots, 1)^\top$.}
    \label{tab: deltas geobaskcall}
    {\footnotesize{
    \begin{tabular}{cccc cc c cc}
        \hline
         &  &  & & \multicolumn{2}{c}{G-LSM} && \multicolumn{2}{c}{DNN \cite{chen2021deep}}   \\
         \cline{5-6} \cline{8-9}
        $d$ & $s_0^i$ & $p$ & exact & delta & error && delta & error  \\
        \hline
        7 & 100 & 10 & $0.0722*\mathbf{1}$ & $(0.0724,0.0725,\dots,0.0724)^\top$ & 0.32\% && $0.0717*\mathbf{1}$ & 0.69\% \\
        13 & 100 & 10 & $0.0387*\mathbf{1}$ & $(0.0389,0.0387,\dots,0.0388)^\top$ & 0.39\% && $0.0384*\mathbf{1}$ & 0.78\% \\
        20 & 100 & 10 & $0.0251*\mathbf{1}$ & $(0.0253,0.0253, \dots,0.0254)^\top$ & 0.59\% && $0.0249*\mathbf{1}$ & 0.80\% \\
        100 & 100 & 6 & $0.00502*\mathbf{1}$ &  $(0.00501,0.00508, \dots, 0.00498)^\top$ & 1.45\% && $0.00498*\mathbf{1}$ & 0.80\% \\
        \hline
    \end{tabular} 
    }}
\end{table}

With a priori knowledge of the regularity of the CVF, we can approximate functions with polynomials in high dimensions. We briefly compare the complexity of the two approaches. Compared with the DNN approximator, G-LSM can involve fewer unknown parameters, and involves a simpler numerical task (solving least-squares problems versus minimizing nonconvex losses). Indeed, at each fixed time step, \cite{chen2021deep} suggests training the neural network with $L=7$ hidden layers and width $d+5$ in each layer, leading to more than $L(d+5)^2$ parameters. In contrast, the number of undetermined parameters in G-LSM is the number  $N_b$ of basis functions, which has a cardinality $\mathcal{O}(p(\ln p)^{d-1})$. For example, for $d=100$, the neural network approach involves more than $77175$ parameters, whereas G-LSM with $p=6$ involves only $N_b = 15451$ basis functions.

Next, \cref{fig: exercise boundary} shows the classification results of continued and exercised data using G-LSM and LSM with the number of simulated paths $M = 100,000$ in $d = 7$ or 20. Compared with the exact exercise boundary, G-LSM achieves better accuracy in determining the exercise boundary than LSM despite of using the same ansatz and number of paths. Thus, even with the right ansatz space, LSM might fail the task of finding exercise boundary in high dimensions using only a limited number of samples. Compared with \cite[Figure 5]{chen2021deep} and \cite[Figure 6]{prcing-hedging-rnn-2023}, \cref{fig: exercise boundary} domonstrates that G-LSM can detect an accurate exercise boundary with fewer number of paths than the DNN-based method.

\begin{figure}[htbp]
    \centering
    \subfloat[G-LSM \label{fig: grad_herm_d7}]{\includegraphics[width = .45\textwidth]{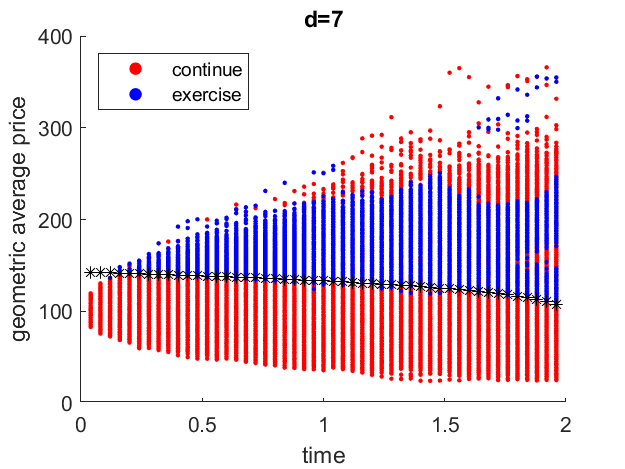}}
    \quad
    \subfloat[LSM \label{fig: lsmc_d7}]{\includegraphics[width = .45\textwidth]{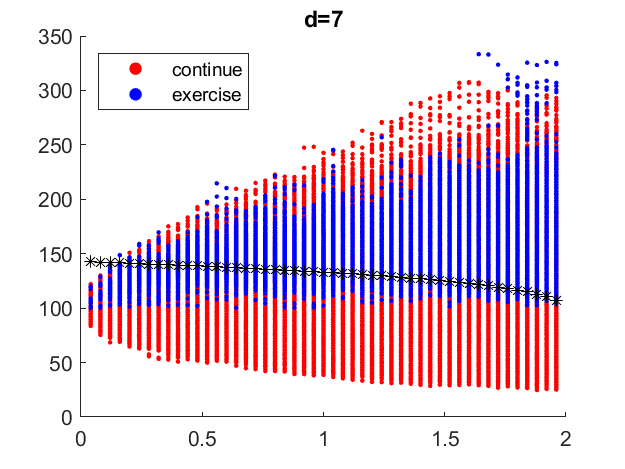}}
    \quad
    \subfloat[G-LSM \label{fig: grad_herm_d20}]{\includegraphics[width = .45\textwidth]{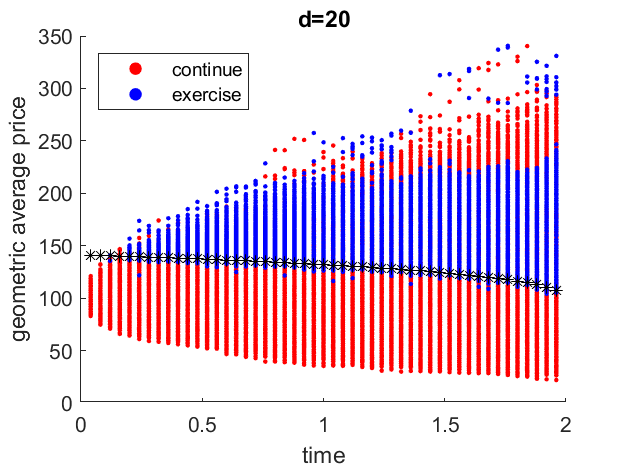}}
    \quad
    \subfloat[LSM \label{fig: lsmc_d20}]{\includegraphics[width = .45\textwidth]{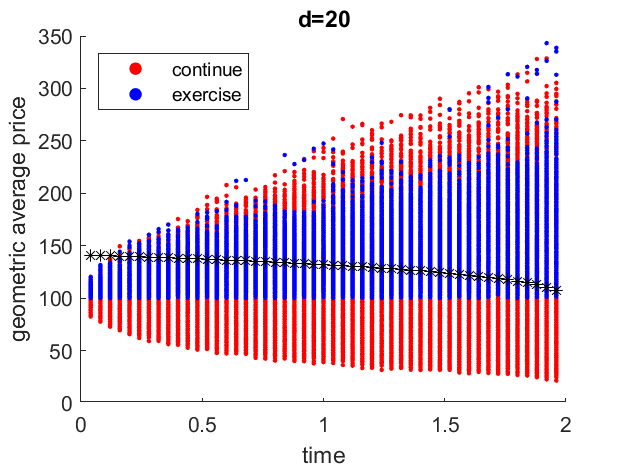}}
    \caption{Classification of the simulated continued and exercised data using $M=100,000$ samples, with $p=10$. Black star dots represent the exact exercise boundary.}
    \label{fig: exercise boundary}
\end{figure}

\subsection{Example 3: Bermudan max-call with symmetric assets} \label{subsec:ex3}
To benchmark G-LSM on high-dimensional problems without exact solutions and to validate the complexity analysis in \cref{sec:algorithm-cost}, we test Bermudan max-call option and report the computing time. The computing time is calculated as follows. For a fixed time step, $T_{\rm bas}$ is the time for generating basis matrix $\Phi$, $T_{\rm mat}$ is the time for assembling matrix $A$, $T_{\rm lin}$ is the time for solving linear system, and $T_{\rm up}$ is the time for updating values. The overall computing time is $T_{\rm tot} \approx (N-1)(T_{\rm bas} + T_{\rm mat} + T_{\rm lin} + T_{\rm up})$.

\Cref{tab: price max call} presents the prices and computing time  (in seconds) for Bermudan max-call options with $d$ symmetric assets. The reference 95\% confidence interval (CI) is taken from \cite{becker2019deep}. The reference CI is computed with more than $3000$ training steps and a batch of $8192$ paths in each step, which in total utilizes more than $10^7$ paths. The last five columns of the table report the computing time for the step 6, 7, 8, 9 in \cref{alg: sparse hermite polynomial expansions}, and the total time, respectively. All the computation for this example was performed on an Intel Core i9-10900 CPU 2.8 GHz desktop with 64GB DDR4 memory using  MATLAB R2023b. It is observed that the prices computed by G-LSM fall into or stay very close to the reference 95\% CI, confirming the high accuracy of G-LSM.  Furthermore, the time for generating basis matrix, $T_{\rm bas}$, dominates the overall computing time. Hence, the cost mainly arises from evaluating Hermite polynomials on sampling paths, which is also required by LSM. In comparison with LSM , $T_{\rm mat}$ is the extra cost to incorporate the gradient information and takes only a small fraction of the total time. Therefore, G-LSM  has nearly identical cost with LSM.

A method known as deep empirical risk minimization (deep ERM) has been introduced in \cite{reppen2024neural}. This approach involves parameterizing the stopping boundary using a neural network and includes Example 3 in \cite[Table 4]{reppen2024neural}. A comparison between \cref{tab: price max call} and \cite[Table 4]{reppen2024neural} reveals that both methods are effective in handling high dimensions. The runtime of the G-LSM method is shorter than that of deep ERM for moderately high dimensions (e.g., $d=2,3,5,10$), but longer for higher dimensions (e.g., $d=20,50$). This difference can be attributed to the fact that deep neural network-based approaches are generally less sensitive to dimensions but require a larger number of paths for training. For instance, the results presented in \cite[Table 4]{reppen2024neural} were obtained using $1,536,000$ simulation paths, whereas \cref{tab: price max call} utilized only $100,000$ paths.

We present in \cref{tab: time max call} the computing time for each basis function to verify that the complexity is almost linear in $N_b$ as analyzed in \cref{sec:algorithm-cost}. The ratio does not vary much with the dimension $d$. Since $N_b = \mathcal{O}(p (\ln p)^{d-1})$ with polynomials up to order $p$, the total computing cost of G-LSM exhibits a polynomial growth, which overcomes the so-called curse of dimensionality. \Cref{fig:hc-index-num} shows the growth of $N_b$ with respect to the dimension for $1\le d\le 200$ and $p=6$, indicating that $N_b$ exhibits a nearly quadratic growth in dimensions for $d\le 200$.

\begin{table}[htbp]
    \centering
    \caption{The results for Bermudan max-call options with $d$ symmetric assets using $M = 100,000$ samples. $s_0^i = 100$ for $i=1,\dots, d$.}
    \label{tab: price max call}
    {\footnotesize{
    \begin{tabular}{ccccccccccc}
        \hline
        $d$ &$p$& $N_b$ & reference 95\% CI & G-LSM & $T_{\rm bas}$ & $T_{\rm mat}$ & $T_{\rm lin}$ & $T_{\rm up}$ & $T_{\rm tot}$  \\
        \hline
        2 & 10 & 29 & $[13.880, 13.910]$ & 13.8970 & 0.1678  &  0.0128  &  0.0024  &  0.0039  &  1.5725 \\
        3 & 10 & 56 & $[18.673, 18.699]$ & 18.6715  & 0.2890  &  0.0265  &  0.0048  &  0.0060  &  2.6347 \\
        5 & 10 & 141 & $[26.138, 26.174]$ & 26.0553 & 0.5718  &  0.0758  &  0.0149  &  0.0132  &  5.4662 \\
        10& 10 & 581 & $ [38.300, 38.367]$ & 38.1738 & 1.9932  &  0.3164  &  0.1397  &  0.0497 &  19.9308 \\
        20& 10 & 2861 & $[51.549, 51.803]$ & 51.6508 & 11.3278  &  1.7153  &  3.2000  &  0.2434 & 131.9362 \\
        30& 5 & 1456 & $[59.476, 59.872]$ & 59.5475 & 7.8154  &  0.7750  &  0.7044  &  0.1247  & 75.3066 \\
        50& 5 & 3926 & $[69.560, 69.945]$ & 69.7216 & 29.3088  &  2.0516  &  5.6871  &  0.3281  & 299.4528 \\
        100& 4 & 5351 & $[83.357, 83.862]$ & 83.6777 & 71.7678  &  2.7185 &  10.8018  &  0.4306 & 684.9927 \\
        \hline
    \end{tabular}
    }}
\end{table}

\begin{table}[htbp]
    \centering
    \caption{Ratio of the computing time and number $N_b$ of basis functions. }
    \label{tab: time max call}
    {\footnotesize{
    \begin{tabular}{cccccc}
        \hline
        $d$ & $T_{\rm bas}/N_b$ & $T_{\rm mat}/N_b$ & $T_{\rm lin}/N_b$ & $T_{\rm up}/N_b$ & $T_{\rm tot}/N_b$  \\
        \hline
        2 & 0.0058  &  0.0004  &  0.0001  &  0.0001  &  0.0542 \\
        3 & 0.0052  &  0.0005  &  0.0001 &   0.0001  &  0.0470 \\
        5 & 0.0041  &  0.0005  &  0.0001  &  0.0001  &  0.0388 \\
        10 & 0.0034  &  0.0005  &  0.0002  &  0.0001  &  0.0343 \\
        20 & 0.0040  &  0.0006  &  0.0011  &  0.0001  &  0.0461 \\
        30 & 0.0054  &  0.0005  &  0.0005  &  0.0001  &  0.0517 \\
        50 & 0.0075  &  0.0005  &  0.0014  &  0.0001  &  0.0763 \\
        100 & 0.0134  &  0.0005  &  0.0020  &  0.0001  &  0.1280 \\
        \hline 
    \end{tabular}
    }}
\end{table}

\begin{figure}[htbp]
    \centering
    \includegraphics[width = 0.45\linewidth]{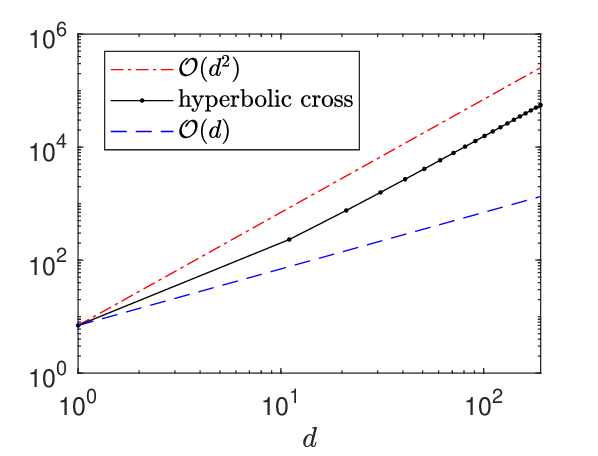}
    \caption{The number  $N_b$ of basis functions in dimension $1\le d \le 200$ with hyperbolic cross index set versus linear and quadratic scale in $d$, with $p=6$. }
    \label{fig:hc-index-num}
\end{figure}

We mention two bottlenecks of the proposed G-LSM for solving higher-dimensional problems.
The first is the main memory requirement of the method, i.e., storing the matrix $A\in \mathbb{R}^{M\times N_b}$, which takes $M\times N_b\times 8$ bytes in double-precision floating-point format. For $d=100$, the memory for storing $A$ is about 4GB. This limits the application of G-LSM with direct solvers in higher dimensions. To remedy the issue, one can solve the linear system on a large RAM server, or use single-precision floating-point, or with stochastic gradient descent. Second, although hyperbolic cross type sparse index can overcome the curse of dimensionality in the sense that the computational complexity does not grow exponentially with respect to $d$, cf. \cref{fig:hc-index-num}, the practical implementation for large $d$ can still be non-trivial even when the computational complexity grows only polynomially with respect to $d$. Thus, we have presented the numerical examples only up to $d=100$.

\Cref{fig:glsm_lsm_maxcall_classification} shows the classification of continued and exercised sample points computed by G-LSM and LSM in the example of two-dimensional max-call. G-LSM yields a smoother exercise boundary than LSM. Compared with the exercise boundary computed in literature \cite[Figure 3]{reppen2023deep}, G-LSM exhibits higher accuracy, albeit that the same ansatz space for the CVF is employed. 

\begin{figure}[htbp]
    \centering
    \subfloat[G-LSM, at $t_1$\label{fig: glsm_maxcall_t_1}]{\includegraphics[width = .32\textwidth]{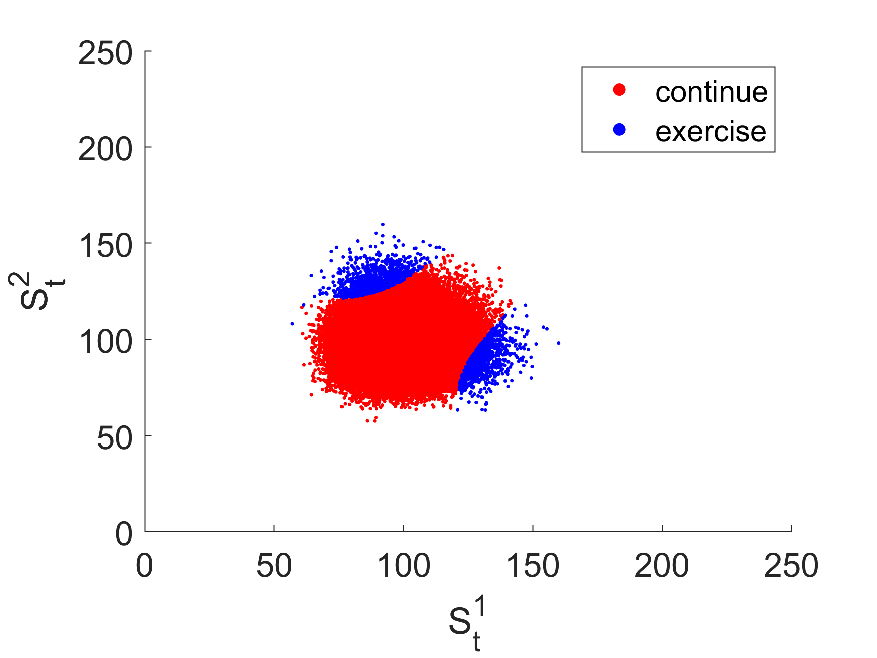}}
    \subfloat[G-LSM, at $t_4$ \label{fig: glsm_maxcall_t_4}]{\includegraphics[width = .32\textwidth]{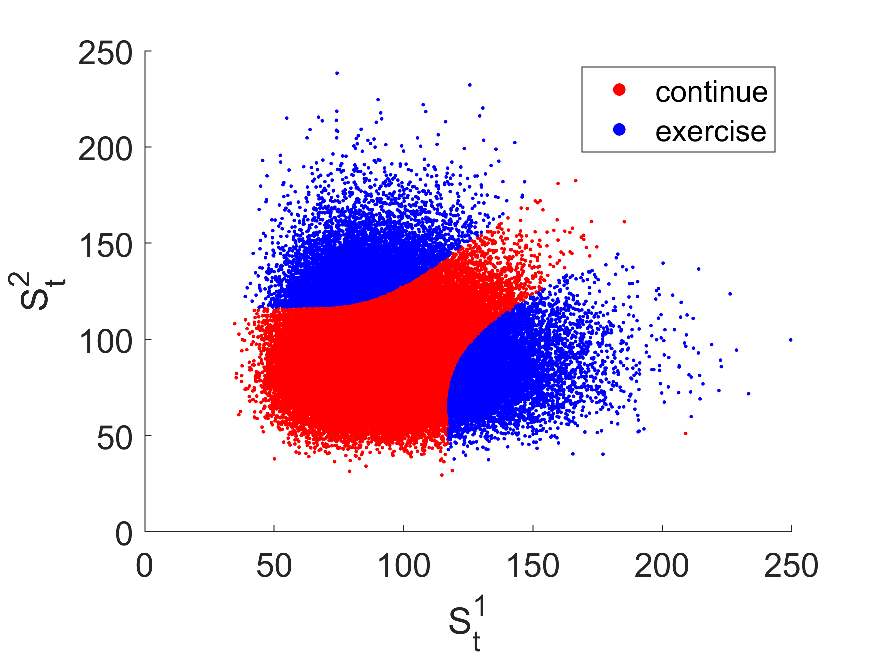}}
    \subfloat[G-LSM, at $t_7$ \label{fig: glsm_maxcall_t_7}]{\includegraphics[width = .32\textwidth]{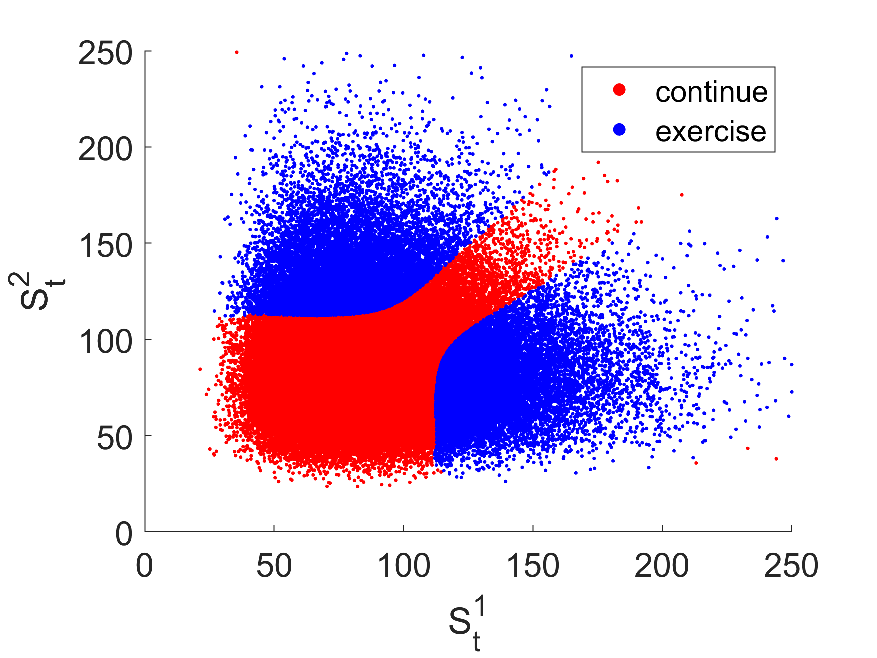}}
    \quad
    \subfloat[LSM, at $t_1$\label{fig: lsm_maxcall_t_1}]{\includegraphics[width = .32\textwidth]{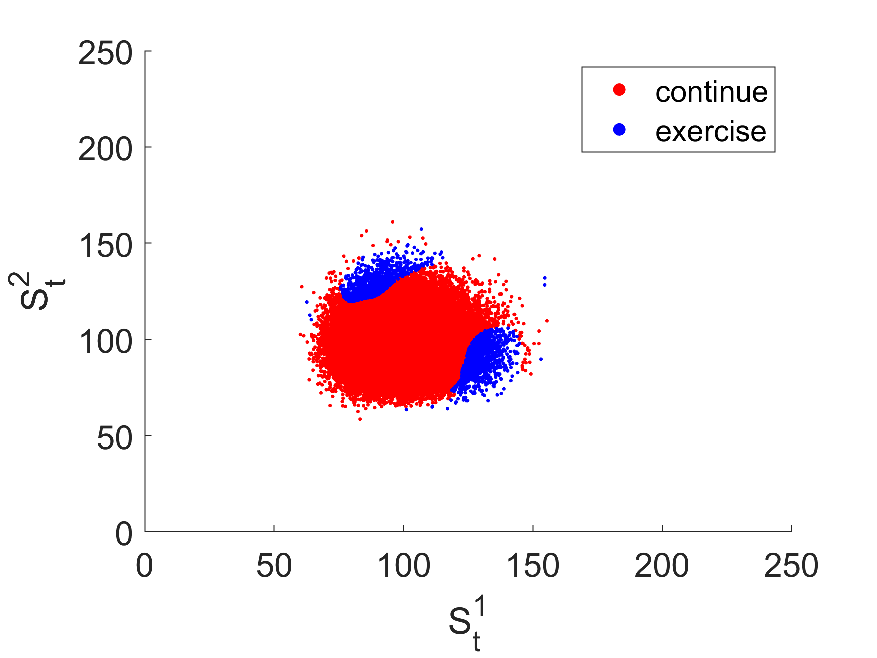}}
    \subfloat[LSM, at $t_4$ \label{fig: lsm_maxcall_t_4}]{\includegraphics[width = .32\textwidth]{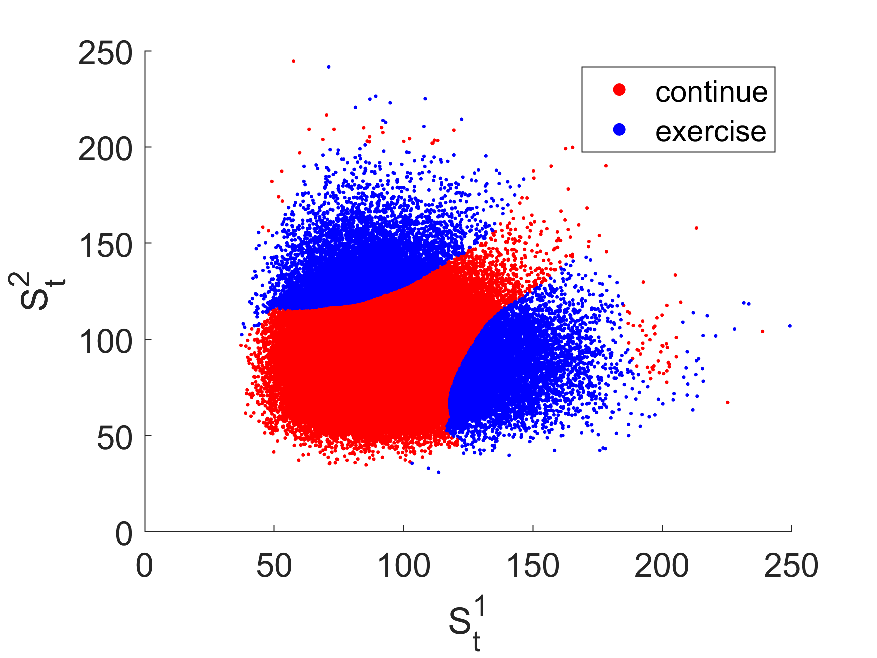}}
    \subfloat[LSM, at $t_7$ \label{fig: lsm_maxcall_t_7}]{\includegraphics[width = .32\textwidth]{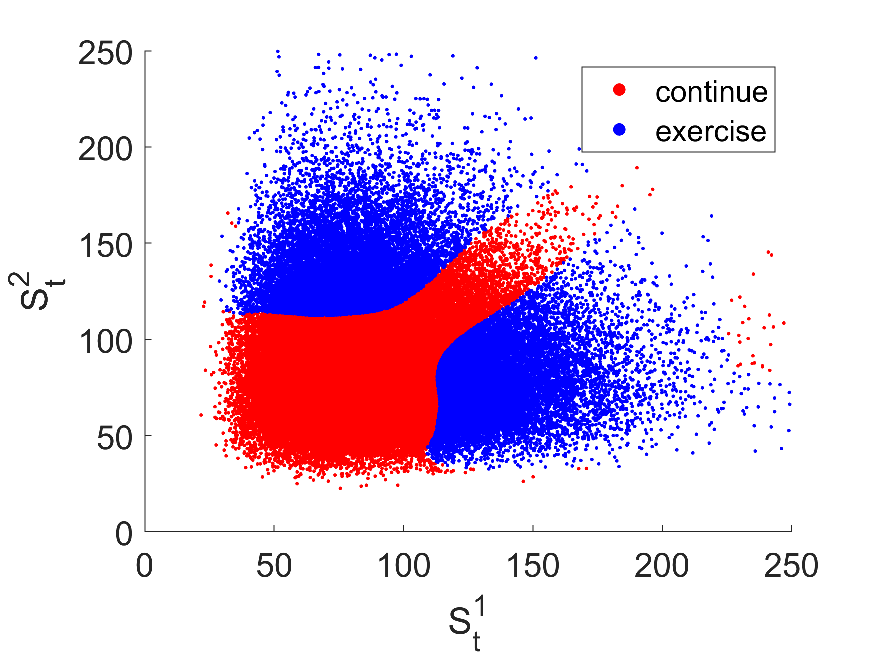}}
    \caption{Classification of simulated continued and exercised data for 2-$d$ Bermudan max-call option at different times $t_k$, $k=1, 4, 7$. Here, we take $s_0^i = 100$ for $i=1,2$, $M=100,000$ and $p=20$.}
    \label{fig:glsm_lsm_maxcall_classification}
\end{figure}

\subsection{Example 4: Bermudan max-call with asymmetric assets}
Example 3 assumes that all underlying assets follow the same dynamic. To further show the robustness of G-LSM, we consider the Bermudan max-call option but each asset has different volatility. The reference 95\% confidence interval (CI) is taken from \cite{becker2019deep}. The pricing results for different initial price $s_0^i$ and dimension $d$ are listed in \cref{tab: price max call asym}. The standard error (s.e.) is calculated by $\sqrt{\frac{1}{10(10-1)}\sum_{i=1}^{10}(v_0^{(i)} - \bar{v})^2}$ with $\bar{v}$ being the average of $10$ independent runs. Similar to Example 3, the prices computed by G-LSM always fall into or stay very close to the reference 95\% CI.

\begin{table}[htbp]
    \centering
    \caption{Results for Bermudan max-call options with $d$ asymmetric assets, with $M = 100,000$. }
    \label{tab: price max call asym}
    {\footnotesize{
    \begin{tabular}{cccccc}
        \hline
        $d$ & $p$ & $s_0^i$ & reference 95\% CI & G-LSM & s.e. \\
        \hline
        2 & 10 & 90 &  $[14.299, 14.367]$ & 14.3472 & 0.0230 \\
        2 & 10 & 100 & $[19.772, 19.829]$ & 19.8019 & 0.0371 \\
        2 & 10 & 110 & $ [27.138, 27.163]$ & 27.1041 & 0.0213 \\
        3 & 10 & 90 & $ [19.065, 19.104]$ & 19.0266 & 0.0241 \\
        3 & 10 & 100 & $[26.648, 26.701]$ & 26.6931 & 0.0411 \\
        3 & 10 & 110 & $ [35.806, 35.835]$ & 35.8363 & 0.0472 \\
        5 & 10 & 90 & $[27.630, 27.680]$ & 27.6032  &  0.0259 \\
        5 & 10 & 100 & $[37.940, 38.014]$ & 37.9309  &  0.0405 \\
        5 & 10 & 110 & $ [49.445, 49.533]$ & 49.3711  &  0.0473 \\
        10 & 10 & 90 & $ [85.857, 86.087]$ & 85.8221  &  0.0376 \\
        10 & 10 & 100 & $[104.603, 104.864]$ & 104.7052  &  0.1029 \\
        10 & 10 & 110 & $ [123.570, 123.904]$ & 123.4777  &  0.0686 \\
        20 & 10 & 90 & $ [125.819, 126.383]$ & 126.4276  &  0.0980 \\
        20 & 10 & 100 & $[149.480, 150.053]$ & 150.4028  &  0.1194 \\
        20 & 10 & 110 & $[173.144, 173.937]$ & 173.8584  &  0.1349 \\
        30 & 5 & 90 & $[154.378, 155.039]$ & 154.6913  &  0.1128 \\
        30 & 5 & 100 & $[181.155, 182.033]$ & 181.6733  &  0.1385 \\
        30 & 5 & 110 & $ [208.091, 209.086]$ & 208.1267  &  0.1160 \\
        50 & 5 & 90 & $[195.793, 196.963]$ & 196.6921  &  0.0890 \\
        50 & 5 & 100 & $[227.247, 228.605]$ & 227.7831  &  0.1385  \\
        50 & 5 & 110 & $ [258.661, 260.092]$ & 259.7261  &  0.1413 \\
        100 & 4 & 90 & $ [263.043, 264.425]$ & 263.0543  &  0.1515 \\
        100 & 4 & 100 & $[301.924, 303.843]$ & 302.0623  &  0.2130  \\
        100 & 4 & 110 & $ [340.580, 342.781]$ & 340.8581  &  0.2233 \\
        \hline 
    \end{tabular}
    }}
\end{table}

\subsection{Example 5: Bermudan put under Heston model} \label{subsec:heston}
The previous experiments and theoretical analysis are concerned with the most frequently used multi-asset Black-Scholes model, cf. \cref{sec:multi model}. We now generalize G-LSM to price Bermudan option under the Heston model. The Heston model defines the dynamic of the log-price process, $X_t^1 := \ln(S_t/S_0)$, and the volatility process, $v_t$, by two-dimensional SDEs:
\begin{align*}
    \dd X_t^1 &= (r-\tfrac{1}{2}v_t)\,\dd t + \sqrt{v_t} \left(\rho \,\dd W_t^1 + \sqrt{1-\rho^2}\,\dd W_t^2 \right), \\
    \dd v_t &= \kappa(\theta - v_t)\,\dd t + \nu \sqrt{v_t}\,\dd W_t^1,
\end{align*}
where $W_t^1$ and $W_t^2$ are two independent Wiener processes, and the model parameters $r$, $\kappa$, $\theta$, $\nu$ and $\rho$ represent the interest rate, the speed of mean reversion, the mean level of variance, the variance of volatility process, and the correlation coefficient, respectively. Since the transition density of log-variance has better regularity \cite{fang2011fourier}, we take the log-variance process $X_t^2 := \ln(v_t)$ as the regression variable. Let $\mathbf{X}_t = [X_t^1, X_t^2]^\top$ be a two-dimensional process. The discounted continuation value at time $t_k$ is given by
\begin{equation*}
    c_k(\mathbf{X}_{t_k}) = \mathbb{E}[u_{k+1}(\mathbf{X}_{t_{k+1}}) | \mathbf{X}_{t_k}],
\end{equation*}
with $u_{k+1}$ being the discounted value function. In view of the martingale representation theorem and backward Euler approximation, we obtain
\begin{equation*}
    u_{k+1}(\mathbf{X}_{t_{k+1}}) \approx c_k(\mathbf{X}_{t_k}) + \big( \sigma(\mathbf{X}_{t_k})^\top \nabla c_k(\mathbf{X}_{t_k}) \big) \cdot \Delta \mathbf{W}_k,
\end{equation*}
where $\sigma(\cdot)$ is the covariance matrix of $\mathbf{X}_t$ given by
\begin{equation*}
    \sigma(\mathbf{X}_t) = \begin{bmatrix}
        \rho \exp(X_t^2/2) & \sqrt{1-\rho^2} \exp(X_t^2/2) \\
        \nu \exp(-X_t^2/2) & 0
    \end{bmatrix}.
\end{equation*}
The reference prices are computed by the COS method \cite{fang2011fourier}, with $2^7$ cosine basis functions to approximate the transition density and $2^7$ points for the quadrature rule in log-variance dimension. The results of G-LSM are calculated using 70 basis functions (polynomials up to order $20$) with Hermite polynomials in log-price dimension and Chebyshev polynomials in log-variance dimension. The results in \cref{tab:heston} show that the computed prices by G-LSM coincide with the reference prices up to the two places after the decimal separator for most cases. In \cref{fig:cos_glsm}, we plot the classification results of continued and exercised points using two approaches. Given that the G-LSM method is simulation-based, it has great potential for stochastic volatility models in higher dimensions.

\begin{table}[htbp]
    \centering
    \caption{Prices of Bermudan put option under the Heston model. $M=100,000$. $p=20$.}
    \label{tab:heston}
    {\footnotesize{
    \begin{tabular}{cccccc}
        \hline
        $s_0$ & 8 & 9 & 10 & 11 & 12 \\
        \hline
        COS & 1.9958 & 1.1061 & 0.5186 & 0.2131 & 0.0818 \\
        G-LSM & 1.9949 & 1.0972 & 0.5146 & 0.2118 & 0.0807 \\
        \hline 
    \end{tabular}
    }}
\end{table}

\begin{figure}[hbt!]
    \centering
    \subfloat[COS \label{fig: fig_cos_t25_So8}]{\includegraphics[width = .45\textwidth]{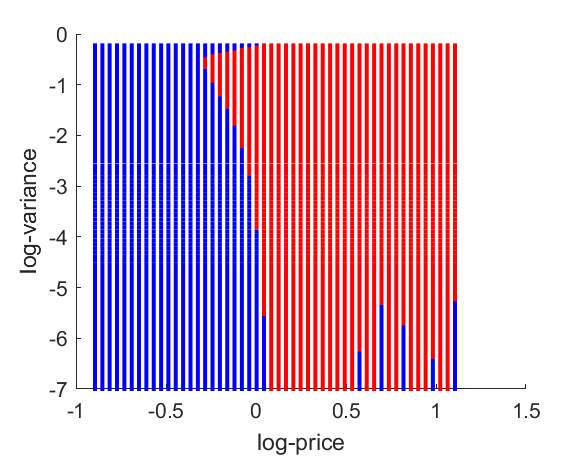}}
    \quad
    \subfloat[G-LSM \label{fig: fig_glsm_t25_So8}]{\includegraphics[width = .45\textwidth]{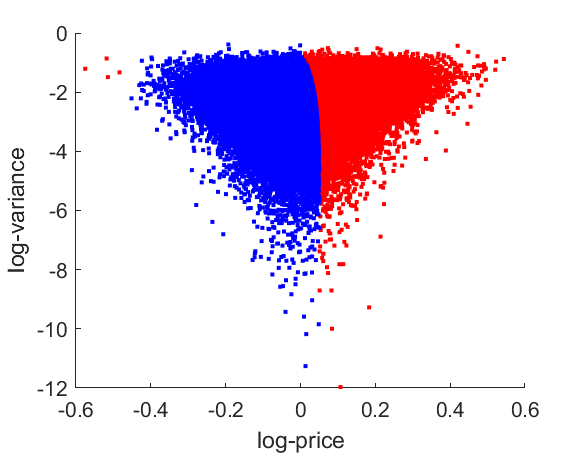}}
    \caption{Classification of continued and exercised grid/simulated points using COS and G-LSM under the Heston model at time $t_{25}$ with initial price $s_0 = 8$.}
    \label{fig:cos_glsm}
\end{figure}

\section{Conclusions and outlook} \label{sec:conclusion}
In this work, we have proposed a novel gradient-enhanced least squares Monte Carlo (G-LSM) method that employs sparse Hermite polynomials as the ansatz space to price and hedge American options. The method enjoys low complexity for the gradient evaluation, ease of implementation and high accuracy for high-dimensional problems. We analyzed rigorously the convergence of G-LSM  based on the BSDE technique, stochastic and Malliavin calculus. Extensive benchmark tests clearly show that it outperforms least squares Monte Carlo (LSM) in high dimensions with almost the same cost and it can also achieve competitive accuracy relative to the deep neural networks-based methods. 

There are several avenues for further research. The superiority of G-LSM over LSM in high dimensions indicates that matching option values at $t_{k+1}$ might be a better choice than at $t_k$ for approximating the continuation value function. There are other variants of LSM with different ansatz spaces, and it is natural to ask whether incorporating the gradient information will also result in improved performance for these variants. Moreover, to solve higher dimensional problems, e.g., $d=1000$, the hierarchical tensor train technique can be applied, which has been combined with LSM in \cite{bayer2023pricing}, and it is natural to combine the technique with G-LSM. Numerical results also show the potential of G-LSM for stochastic volatility models. It would be interesting to investigate both numerically and theoretically the proposed G-LSM for even more challenging financial models, e.g., rough volatility model. Numerically, for such non-Markovian models, the algorithm would require suitable Markovian approximation, and theoretically, one key aspect is to establish a suitable framework for analyzing the approximation error. Furthermore, it is of interest to assess the proposed G-LSM on practical applications, by a suitable combination with techniques for calibrating model parameters from real market data.

\section*{Acknowledgments}
The authors acknowledge the support of research computing facilities offered by Information Technology Services, the University of Hong Kong. The authors are also grateful to two anonymous referees for their constructive comments which have led a significant improvement in the quality of the paper.

\bibliographystyle{siamplain}
\bibliography{references}
\end{document}